\def\BibTeX{{\rm B\kern-.05em{\sc i\kern-.025em b}\kern-.08em
    T\kern-.1667em\lower.7ex\hbox{E}\kern-.125emX}}
\newtheorem{theorem}{Theorem}
\newtheorem{lemma}{Lemma}
\newtheorem{proposition}{Proposition}
\newtheorem{corollary}{Corollary}
\newcounter{hints}
\renewcommand{\thehints}{\alph{hints}}
\newcommand{\hintedrel}[2][]{%
  \stepcounter{hints}%
  \if\relax\detokenize{#1}\relax\else\csxdef{hint@#1}{\thehints}\fi
  \mathrel{\overset{\textrm{(\thehints)}}{\vphantom{\le}{#2}}}%
}
\newcommand{\restarthintedrel}{\setcounter{hints}{0}}
\newcommand{\hintref}[1]{\csuse{hint@#1}}
\newcommand{\lb}{\left(}
\newcommand{\rb}{\right)}
\newcommand{\ls}{\left[}
\newcommand{\rs}{\right]}
\newcommand{\ux}{\underline{\mathbf{x}}}
\newcommand{\uy}{\underline{\mathbf{y}}}
\newcommand{\uz}{\underline{\mathbf{z}}}
\newcommand{\matX}{\mathbf{X}}
\newcommand{\mc}[1]{\mathcal{#1}}
\long\def\symbolfootnote[#1]#2{\begingroup%
\def\thefootnote{\fnsymbol{footnote}}\footnote[#1]{#2}\endgroup}
\renewcommand\subsubsection{\@startsection{subsubsection}{3}{0mm}{0ex plus 0.1ex minus 0.1ex}%
{0.3ex plus 0ex}{\normalfont\normalsize\itshape}}%
\title{On Finding a Subset of Non-Defective Items from a Large Population\thanks{This work was presented in part in \cite{Sharma_ITA_2013}.}}
\author{ Abhay Sharma and Chandra R. Murthy\\
  Dept. of ECE,  Indian Institute of Science, Bangalore 560 012, India \\
  abhay.bits@gmail.com, cmurthy@ece.iisc.ernet.in
         }
\begin{document}
\maketitle
\begin{abstract}

In this paper, we derive mutual information based upper and lower bounds on the number of nonadaptive group 
tests required to identify a given number of ``non-defective'' items from a large population 
containing a small number of ``defective'' items. 
We show that a reduction in the number of tests is achievable 
compared to the approach of first identifying all the defective items and then picking the required number 
of non-defective items from the complement set. 
In the asymptotic regime with the population 
size $N \rightarrow \infty$, to identify $L$ non-defective items out of a population containing $K$ defective 
items, when the tests are reliable, our results show that $\frac{C_s K}{1-o(1)} (\Phi(\alpha_0, \beta_0) + o(1))$ 
measurements are sufficient, where $C_s$ is a constant independent of $N, K$ and $L$, and $\Phi(\alpha_0, \beta_0)$ is a bounded function of $\alpha_0 \triangleq \lim_{N\rightarrow \infty} \frac{L}{N-K}$ and $\beta_0 \triangleq \lim_{N\rightarrow \infty} \frac{K}{N-K}$. 
Further, in the nonadaptive group testing setup, we obtain rigorous upper and lower bounds 
on the number of tests under both dilution and additive noise models. Our results are derived using a general sparse signal model, by virtue of which, they are also 
applicable to other important sparse signal based applications such as compressive sensing.
\end{abstract}

\begin{keywords}
Sparse signal models, nonadaptive group testing, inactive subset recovery.
\end{keywords}

\section{Introduction}
\label{sec:Introduction}

Sparse signal models are of great interest due to their applicability in a variety of  
areas such as compressive sensing\cite{Candes_DecbyLP}, 
group testing\cite{Dorfman_GT, du2006}, 
signal de-noising\cite{Bruckstein_sparse}, 
subset selection\cite{Trop_JustRelax}, etc. Generally speaking, in a sparse
signal model, out of a given number $N$ of input variables, 
only a small subset of size $K$  contributes to 
the observed output.
For example, in a non-adaptive group testing setup, the output
depends only on whether the items from the defective set participate
or not participate in the group test.
Similarly, in a compressive sensing setup, the output signal 
is a set of random projections of the signal corresponding to the 
non-zero entries (support set) of the input vector.
This \emph{salient} subset of inputs is referred to by different 
names,
e.g., defective items, sick individuals, support set, etc.
In the sequel, we will refer to it as \emph{the active set}, and its
complement as \emph{the inactive set}. 
In this paper, we address the issue of the \emph{inactive subset recovery}.
That is, we focus on the task of finding an $L~(\le N-K)$ sized subset of \emph{the inactive set} (of size $N-K$),
given the observations from a sparse signal model with $N$ inputs,
out of which $K$ are active.

The problem of finding a subset of items belonging to the inactive set is  of 
interest in many applications.
An example is the spectrum hole search problem
in the cognitive radio (CR) networks\cite{HaykinCognitiveRadio}.
It is well known that the primary user occupancy (active set) is
sparse in the frequency domain over a wide band of 
interest\cite{Cabric05cognitive,FCCreport}. 
To setup a CR network, the secondary users need to find 
an appropriately wide unoccupied (inactive) frequency band.
Thus, the main interest here is the identification of \emph{only
a sub-band} out of the total available unoccupied band, i.e., it 
is an inactive subset recovery problem.
Furthermore, the required bandwidth of the spectrum hole
will typically be a small fraction of the entire bandwidth that is free at any 
point of time\cite{Sharma_Globecom_2012}.
Another example is a product manufacturing plant, where
a small shipment of non-defective (inactive) items has to be delivered
on high priority. Once again, the interest here is
on the identification of a subset of the non-defective items using as few tests as possible.

Related work: In the group testing literature, the problem of bounding the number of tests required to identify the defective items in a large pool has been studied, both in the 
noiseless and noisy settings, both for tractable decoding algorithms as well as under general information theoretic models
\cite{kautz_nonrand_supimp, Erdos_cover, Ruszinko94, Dachkov_bound, Sebo1985, Gilbert_GT, vetterli_ngt, malyutov_1, malyutov_2, malyutov_3, Atia_BooleanCS, jaggi_gtalgo, aldrige_cap_adaptive_gt, aldridge_almostseparable, Scar_Cev16}.
A combinatorial approach has been adopted in \cite{kautz_nonrand_supimp, Erdos_cover,Ruszinko94}, where 
explicit constructions for the test matrices are used, e.g., using superimposed codes, 
to design matrices with properties that lead to guaranteed detection of a small number of defective
items.
Two such properties were  considered: disjunctness and separability\cite{du2006}.\footnote{A test matrix, with tests indexing the rows and items indexing 
the columns, is said to be $k$-disjunct 
if the boolean sum of every $k$ columns does not equal any other column in the matrix.
Also, a test matrix is said to be $k$-separable if the boolean sum of every set of $k$ columns is unique.}
A probabilistic approach was adopted in\cite{Dachkov_bound,Sebo1985, Gilbert_GT, vetterli_ngt}, where  random test matrix designs were considered, and 
upper and lower bounds on the number of tests required to satisfy the properties 
of disjunctness or separability with high probability were derived. In particular, \cite{vetterli_ngt} analyzed the performance of group testing under the so-called dilution noise. 
Another study \cite{jaggi_gtalgo} uses random test designs, and develops computationally efficient algorithms for identifying 
defective items from the noisy test outcomes by exploiting the connection with compressive sensing.
A very recent work \cite{Scar_Cev16} uses novel information theoretic techniques, based on information density, to study 
the phase transitions for Bernoulli test matrix designs and measurement-optimal recovery algorithms.
A general sparse signal model for studying group testing problems, that turns out  to be 
very useful in dealing with noisy settings, was proposed and used 
in \cite{malyutov_1, malyutov_2, malyutov_3, Atia_BooleanCS}.
In this framework, the group testing problem was formulated as a detection problem and
a one-to-one correspondence was established with a communication channel model.
Using information theoretic arguments,  mutual information based expressions (that are easily computable for a wide variety of noisy channels) for upper and lower bounds on 
the number of tests were obtained\cite{Atia_BooleanCS}.
In the related field of compressive sensing, an active line of research has 
focused on the conditions under which reliable signal recovery from observations 
drawn from a linear sparse signal model is possible, 
for example, conditions on the number of measurements required 
and on isometry properties of the measurement matrix (\!\!\cite{Candes_Stable,WW_SharpThresholds}, 
and references therein). In particular, there exists a good understanding of the 
bounds on the number of measurements required for support recovery from noisy 
linear projections (e.g., \cite{saligrama_CS, fletcher2009necessary, 
Reeves_conf_supprecov,Jin_SuppRecovMAC, Wainwright_SparseRecovery}). 


Thus, to the best of our knowledge, fundamental bounds on the number of tests needed to find 
$L$ non-defective items, which is the focus of this paper, have not been derived in 
the existing literature. 
A recent work \cite{Yoo_arxiv_2013} studies the problem of finding zeros in a sparse vector 
in the framework of compressive sensing. The authors propose computationally efficient 
recovery algorithms and study their performance through simulations. In contrast, our 
work builds on our earlier work \cite{Sharma_ITA_2013}, and focuses on deriving information 
theoretic upper and lower bounds on the number of measurements needed for identifying a 
given number of inactive items in a large population with arbitrarily small probability of error. 
In this paper, we consider the general sparse signal model employed
in  \cite{malyutov_1, Atia_BooleanCS} in context of a support recovery problem.
The model consists of $N$ input covariates, out of which, an unknown subset $S$
of size $K$ are ``active''; in the sense that, only the active variables,
i.e., the variables from the set $S$, are relevant to the output. 
Mathematically, this is modeled by assuming that, given the active set $S$, the 
output $Y$ is independent of remaining input variables. Further, the probability 
distribution of the output conditioned on a given active set,  is assumed to be 
known for all possible active sets.
Given multiple observations from the this model, we propose and analyze decoding schemes
to identify \emph{a} set of $L$ inactive variables.
We compare two alternative decoding schemes: (a) Identify
the active set and then choose $L$ inactive covariates
randomly from the complement set, and, (b) Decode the inactive subset 
directly from the observations.
Our main contributions are as follows:
\begin{enumerate}
  \item We analyze the average probability of error for both the decoding schemes. 
    We use the analysis to obtain mutual information based upper bounds on the 
    number of observations required to identify a set of $L$ inactive 
    variables with the probability of error decreasing exponentially with the number of observations.
  \item We specialize the above bounds to various
    noisy non-adaptive group testing scenarios, and characterize the 
    number of tests required to identify $L$ non-defective items, in terms of $L$, $N$ and $K$.
  \item We also derive a lower bound, based on Fano's inequality,
    characterizing the number of observations required to identify $L$ inactive 
    variables.
\end{enumerate}

Our results show that, compared to the conventional approach of
identifying the inactive subset by first identifying the active set, 
directly searching for an $L$-sized inactive subset offers a  reduction in the 
number of observations (tests/measurements), especially when $L$ is small compared to $N-K$.
When the tests are reliable, in the asymptotic regime as $N \rightarrow \infty$, if $\frac{L}{N-K} \rightarrow \alpha_0$ and $\frac{K}{N-K} \rightarrow \beta_0$, $\frac{C_s K}{1-o(1)} (\Phi(\alpha_0, \beta_0) + o(1))$ 
measurements are sufficient, where $C_s$ is a constant independent of $N, K$ and $L$, and $\Phi(\alpha_0, \beta_0)$ is a bounded function of $\alpha_0$ and $\beta_0$. 
%
We show that this improves on the number of observations required by the conventional approach, in the sequel. 

The rest of the paper is organized as follows.
Section \ref{sec:PrbSetup} describes the signal model
and problem setup. 
We present our upper and lower bounds on 
the number of observations in Sections~\ref{result_suff} and \ref{necc_cond}, 
respectively.
An application of the bounds to group testing is described in
Section~\ref{sec_group_test}.  
The proofs for the main results are provided in Section~\ref{sec_thm_proofs}, 
and concluding remarks are offered in Section~\ref{sec_conclusions}.

\noindent \textbf{Notation:} 
For any positive integer $a$, $[a] \triangleq \{1, 2, \ldots, a\}$. 
For any set $A$, $A^c$ denotes complement operation and $|A|$ denotes 
the cardinality of the set. For any two sets $A$ and $B$, 
$A\backslash B = A \cap B^c$, i.e., elements of $A$ that are not in $B$. 
$\{ \emptyset\}$ denotes the null set.
Scalar random variables (RVs) are represented by capital 
non-bold alphabets, e.g., $\{Z_1, Z_3, Z_5, Z_8\}$ 
represent a set of $4$ scalar RVs. 
If the index set is known, 
we also use the index 
set as sub-script, e.g., $Z_{S}$, where $S = \{1, 3, 5, 8\}$. Bold-face 
letters represent random matrices (or a set of vector random variables).
We use an index set to specify a subset of columns from the given 
random matrix.
For example, let $\mathbf{Z}$ denote a random matrix with $n$ columns. For
any $S \subset [n]$, $\mathbf{Z}_{S}$ denotes a set of $|S|$ columns of 
$\mathbf{Z}$. Similarly, for any $S_1, S_2 \subset [n]$, $\mathbf{Z}_{S_1 \cup S_2}$
denotes a set of columns of $\mathbf{Z}$ specified by the index set $S_1 \cup S_2$.
Individual vector RVs are also denoted 
using an underline, e.g., $\underline{\mathbf{z}}$ represents a single random vector.
For any discrete random variable $Z$, $\{Z\}$ represents the set of all realizations
of $Z$. Similarly, for a random matrix $\mathbf{Z}$, whose entries are discrete random variables, 
$\{ \mathbf{Z} \}$ represents the set of all realizations of $\mathbf{Z}$.
For any two jointly distributed random variables $\uz_1$ and $\uz_2$, with a slight
abuse of notation, let $P(\uz_1|\uz_2)$ denote the conditional probability 
distribution of $\uz_1$ given ``a realization $\uz_2$'' of the random variable $\uz_2$.
Similarly $P(\uz_1|\mathbf{Z})$ denote the conditional probability 
distribution of $\uz_1$, given a realization $\mathbf{Z}$ of the random matrix $\mathbf{Z}$.
$\mathcal{B}(q), q \in [0~1]$ denotes the Bernoulli distribution with parameter
$q$.
$\mathbb{I}_{\mathcal{A}}$ denotes the indicator function, which returns $1$ if 
the event $\mathcal{A}$ is true, and returns $0$ otherwise.
Note that, $x(n)=O(y(n))$ implies that $\exists~ B > 0$ and $n_0 > 0$, such that $|x(n)| \le B |y(n)|$ for all $n > n_0$.
Similarly, 
$x(n)=\Omega(y(n))$ implies that $\exists~ B > 0$ and $n_0 > 0$, such that $|x(n)| \ge B |y(n)|$ for all $n > n_0$.
Also, $x(n) = o(y(n))$ implies that for every $\epsilon > 0$, there exists an $n_0 > 0$ such that $|x(n)| \le \epsilon |y(n)|$ for all $n > n_0$.
In this work, unless otherwise specified, all logarithms  to the base $e$.
For any $p \in [0, 1]$, $H_b(p)$ denotes the binary entropy in nats, i.e.,
$H_b(p) \triangleq -p \log(p) - (1-p) \log(1-p)$.

\section{Problem Setup}
\label{sec:PrbSetup}
In this section, we describe the signal model and problem setup.
Let $X_{[N]} = \bigl [ X_1, X_2, \ldots, X_N \bigr ]$ denote a set of $N$ independent
and identically distributed input random variables (or \emph{items}).
Let each $X_j$ belong to a finite alphabet denoted by $\mathcal{X}$ and be
distributed as $\text{Pr}\{X_j = x\} = Q(x), x \in \mathcal{X}$,
$j=1, 2, \ldots, N$. For a group of input variables, e.g., $X_{[N]}$,
$Q(X_{[N]}) = \prod_{j \in [N]} Q(X_j)$ denotes the known joint distribution
for all the input variables.
We consider a sparse signal model where only a subset of the input 
variables are
\emph{active} (or \emph{defective}), in the sense that only 
a subset of the input variables contribute to the output.
Let $S \subset [N]$ denote the set of input variables
that are active, with $|S| = K$. 
We assume that $K$, i.e., the size of the active set, is known.
Let $S^c \triangleq [N]\backslash S$ denote the
set of variables that are \emph{inactive} (or \emph{non-defective}).  
Let the output belong to a finite alphabet denoted by $\mathcal{Y}$. 
We assume that $Y$ is generated according to a known conditional distribution $P(Y|X_{[N]})$.
Then, in our observation model, we assume that given 
the active set, $S$, the output signal, $Y$, is independent 
of the other input variables.
That is, $~\forall~ Y \in \mathcal{Y}$,
\begin{align} \label{eq:active_def}
P(Y|X_{[N]}) = P(Y|X_S).
\end{align}

We observe the outputs corresponding to $M$ independent realizations 
of the input variables, and denote the inputs and the 
corresponding observations by $\{\matX, \uy\}$. 
Here, $\matX$ is an $M \times N$ matrix, with its $i^{\text{th}}$ row 
representing the $i^{\text{th}}$ realization of the input variables, and 
$\uy$ is an $M\times 1$ vector, with its $i^{\text{th}}$ component representing
the $i^{\text{th}}$ observed output.  Note that, the independence assumption across 
the input variables and across different observations implies that each entry 
in $\matX$ is independent and identically distributed (i.i.d.). 
Let $L \le N - K$. We consider the problem of finding \emph{a set} of $L$ inactive variables 
given the observation set, $\{\matX, \uy\}$. That is, we wish to find an index set 
$S_{H} \subset S^c$ such that $|S_{H}| = L$.
In particular, our goal is to derive information theoretic 
bounds on the number of observations (measurements/group tests) 
required to find a set of $L$ inactive variables
with the probability of error exponentially decreasing with
the number of observations. 
Here,  an error event occurs 
if the chosen inactive set contains one or more active variables.
Now, one way to find $L$ inactive variables is to find all the
active variables and then choose any $L$ variables from the complement set. Thus, existing bounds on $M$ for finding the active set are an upper
bound on the number of observations required for solving our problem.
However, intuitively speaking, fewer observations should 
suffice to find $L$ inactive variables, since we do not need to find the full active set.
This is confirmed by our results presented in the next section.

The above signal model can be equivalently described, see
Figure~\ref{figure:chan_mod}, using Shannon's random codebook based channel coding framework.
The active set $S$, that corresponds to one of the ${N \choose K}$ 
possible active sets with $K$ variables, constitutes the 
input message. Let $\matX \in  \mathcal{X}^{M \times N}$ be a random
codebook consisting of $N$ codewords of length $M$; each
associated with one of the $N$ input variables. Let $\ux_i$ denote the
codeword associated with $i^{\text{th}}$ input variable.
The encoder encodes the message as a length-$M$ message 
$\matX_S \in \mathcal{X}^{M \times K}$, that comprises of $K$ codewords, each 
of length $M$, chosen according to the index set $S$ from $\matX$. That
is, $\matX_S = [\ux_{i_1} ~ \ux_{i_2} \ldots \ux_{i_K}]$, for each $i_l \in S$.
Let $\matX_S^{(i)}$ denote the $i^{\text{th}}$ row of the matrix $\matX_S$ and let
$\uy(i)$ denote its $i^{\text{th}}$ component.
The encoded message is transmitted through a discrete memoryless 
channel\cite{Gallager_bk,Cover_bk}, denoted by $(\mathcal{X}^M, P(\uy|\matX_S), \mathcal{Y}^M)$, 
where $P(\uy|\matX_S) = \prod_{i=1}^M P(\uy(i)|\matX_S^{(i)})$ and  the distribution function 
$P(\uy(i)|\matX_S^{(i)})$ is known for each active set $S$.
Given the codebook $\matX$ and the output message $\uy$,  
our goal is to find \emph{a set} of $L$ variables \emph{not} belonging to the 
active set $S$.
We would like to mention briefly that the above signal model, 
proposed and used earlier in~\cite{Atia_BooleanCS, malyutov_1}, is a generalization of the signal models 
employed in some of the popular non-adaptive measurement system signal models such as 
compressed sensing\footnote{Although, in this work, we focus on models with finite 
alphabets, our results easily extend to models with continuous 
alphabets\cite{Atia_MutInf, saligrama_sparsesignproc}.} and non-adaptive group testing. 
Thus, the general mutual information based bounds on number of observations to 
find a set of inactive items obtained using the above model are applicable in a 
variety of practical scenarios.

We now discuss the above signal model in context of a specific
non-adaptive measurement system, namely the random pooling based, noisy 
non-adaptive group testing framework\cite{Atia_BooleanCS, du2006}.
In a group testing framework\cite{Atia_BooleanCS, du2006, malyutov_1}, we 
have a population of $N$ items, out of which $K$ are defective. 
Let $\mathcal{G} \subset [N]$ denote the
defective set, such that $| \mathcal{G} | = K$. The group tests are defined
by a boolean matrix, $\matX \in \{0,1\}^{M \times N}$, that assigns
different items to the $M$ group tests (pools). 
In the $i^{\text{th}}$ test, the items corresponding to the columns 
with $1$ in the $i^{\text{th}}$ row of $\matX$ are tested. 
Thus, $M$ tests are specified. As in \cite{Atia_BooleanCS}, we consider an i.i.d.\ random Bernoulli
measurement matrix, where each $X_{ij} \sim \mathcal{B}(p)$ for some
$0 < p < 1$. Here, $p$ is a design parameter 
that controls the average group size.
If the tests are completely reliable, then the output of the $M$ tests is given by the 
boolean OR of the columns of $\matX$ corresponding to the \emph{defective set} $\mathcal{G}$.
We consider the following two different noise models\cite{vetterli_ngt,Atia_BooleanCS}: 
(a) An \emph{additive} noise model, where there is a 
probability, $q \in (0,1] $, that the outcome of a group test containing
only non-defective items comes out positive;  (b) A \emph{dilution} model, where there 
is a probability, $u \in (0,1]$, that a given item does not participate
in a given group test.
We would like to mention that although we consider the two popular noise models mentioned above, the results of this paper can be adapted to other noise models also.
Let $\underline{d}_i \in \{0,1\}^M$. Let 
$\underline{d}_i(j) \sim \mathcal{B}(1-u)$ be chosen independently 
for all $j=1, 2, \ldots M$ and for all $i=1, 2, \ldots N$. Let 
$\mathbf{D}_i \triangleq \text{diag}(\underline{d}_i)$.
Let ``$\bigvee$'' denote the boolean OR operation.
The output vector $\underline{y} \in \{0, 1\}^M$ can be represented as
\begin{align} \label{eq:gtmodel}
  \underline{y} = \bigvee_{i \in \mathcal{G}} \mathbf{D}_i 
  \underline{x}_i \bigvee \underline{w},
\end{align}
where $\underline{x}_i \in \{0, 1\}^M$ is the $i^{\text{th}}$ column of $\matX$, 
$\underline{w} \in \{0, 1\}^M$ is the additive noise with the $i^{\text{th}}$ component
$\underline{w}(i) \sim \mathcal{B}(q)$.
For the noiseless case, $u=0, q=0$. In an additive model, $u=0, q > 0$.
In a dilution model, $u>0, q =0$.

The above ``logical-OR'' signal model represents many practical 
non-adaptive group testing measurement systems.
For example, consider a medical screening application, where a large number
of individuals need to be tested for the presence of a specific antigen in 
their blood. The blood samples drawn from the different individuals 
are pooled together, according to a randomly generated test matrix $\matX$ (as described above),  
into multiple pools. Each pool is tested for the presence of the specific
antigen. This test is well modeled by the OR-operation described above, i.e., 
when the tests are reliable, a test outcome is positive if one or more
samples in the pool contain the antigen, and, a test outcome is negative only if
none of the samples in the pool contain the antigen. Note that, given the knowledge
of the set of individuals with the presence of the antigen, 
the test outcome does not depend upon whether the blood sample from any other
individual is included in the pool or not.
For several other examples of the above described measurement system,
see \cite{Gilber_DataStream, vetterli_ngt, Macula_data_pattern, Wolf_multi_access,du2006}.

We now relate this model with the general sparse signal model described above. 
Note that, $\mathcal{X} = \{0, 1\}$, $\mathcal{Y} =  \{0, 1\}$. 
Each item in the group testing framework corresponds to one of the $N$ input
covariates. The $i^{\text{th}}$ \emph{row} of the test matrix, which specifies
the $i^{\text{th}}$ random pool, corresponds to the $i^{\text{th}}$ realization
of the input covariates. From (\ref{eq:gtmodel}), 
given the defective set $\mathcal{G}$, the $i^{\text{th}}$ test outcome $\uy(i)$ 
is independent of values of input variables from the set $[N]\backslash \mathcal{G}$.
That is, with regards to test outcome,  it is \emph{irrelevant} whether the items 
from the set $[N]\backslash \mathcal{G}$ are included in the test or not.
Thus, $\mathcal{G}$ corresponds to the active set $S$.
Further, with regards to the channel coding setup, the test matrix $\matX$ corresponds 
to the random codebook, and each column specifies the $M$ length random code with the
associated item.  The channel model, i.e., the probability distribution 
functions $P(\uy | \matX_{\mathcal{G}})$ for any $\mathcal{G}$, is fully determined 
from (\ref{eq:gtmodel}) and the statistical models for the dilution and additive noise.
Thus, it is easy to see that the group testing framework is a special case of 
the general sparse model that we have considered, and, the number of group tests 
correspond directly to the number of observations in the context of sparse models. 



  
We now define two quantities that are very useful in the development to follow. Let $S$ be a given active set. For any $1 \le j \le K$, let $S^{(j)}$ and $S^{(K-j)}$ represent a partition of $S$ such that 
    $S^{(j)} \cup S^{(K-j)} = S$, $S^{(j)} \cap S^{(K-j)} = \{ \emptyset\}$ and $|S^{(j)}| = j$. 
    Define
    \begin{align} \label{eq:E0_def}
      E_0(\rho,j, n) &= - \log \sum_{Y \in \mathcal{Y}} ~ ~ \sum_{X_{S^{(K-j)}} \in \mathcal{X}^{K-j}} 
    \left \{ \sum_{X_{S^{(j)}} \in \mathcal{X}^j} Q(X_{S^{(j)}}) 
    \left( P(Y, X_{S^{(K-j)}} | X_{S^{(j)}})\right)^{\frac{1}{1 + \rho n}}
    \right \}^{ 1 + \rho n} 
 \end{align}
 for any positive integer $n$ and any $\rho \in [0,1]$.
    Define $I^{(j)} \triangleq I(Y, X_{S^{(K-j)}} ; X_{S^{(j)}} ) =  I(Y ; X_{S^{(j)}} | X_{S^{(K-j)}})$
    as the mutual information between $\{Y, X_{S^{(K-j)}}\}$ 
    and $X_{S^{(j)}}$\cite{Gallager_bk,Cover_bk}. Mathematically,
    \begin{align} \label{eq:MI_gen_def}
      I^{(j)} = \sum_{Y \in \mathcal{Y}}
      \sum_{X_{S^{(K-j)}} \in \mathcal{X}^{K-j}} 
      \sum_{X_{S^{(j)}} \in \mathcal{X}^j} 
      P(Y, X_{S^{(K-j)}} | X_{S^{(j)}}) Q(X_{S^{(j)}})
      \log  \frac{P(Y, X_{S^{(K-j)}} | X_{S^{(j)}})}{P(Y, X_{S^{(K-j)}}) }.
    \end{align}
    Using the independence assumptions in the signal model, by the symmetry of the codebook construction, 
    for a given $j$, $E_0(\rho, j, n)$ and $I^{(j)}$ are independent of the 
    specific choice of $S$, and of the specific partitions of $S$. 
    It is easy to verify that 
    $\dfrac{d E_0(\rho, j, n)}{d\rho} |_{\rho = 0} = n I^{(j)}$.
    Furthermore, it can be shown that $E_0(\rho, j,n)$ is a concave
    function of $\rho$\cite{Gallager_bk}.
    

\begin{figure}[t]
\centering
\includegraphics[scale=0.7]{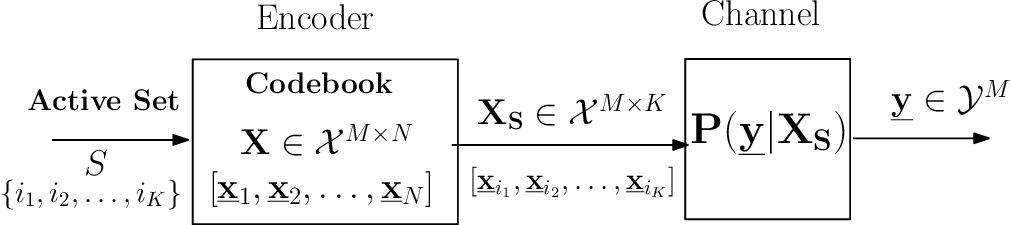}
\caption{Sparse signal model: An equivalent random codebook based channel coding model.}
\label{figure:chan_mod}
\end{figure}

\section{Sufficient Number of Observations} \label{result_suff}
We first present results on the sufficient number of observations to find a set of 
$L$ inactive variables.
The general methodology used to find the upper bounds
is as follows: (a) Given a set of inputs and 
observations, $\{\matX, \uy\}$, we first propose a decoding algorithm to find 
an $L$-sized inactive set, $S_{H}$; (b) For the given decoding scheme,
we find (or upper bound) the average probability of error, where the 
error probability is averaged over the random set $\{\matX, \uy\}$ as well as 
over all possible choices for the active set.
An error event occurs when the decoded set of $L$ inactive variables contains
one or more active variables. That is, with $S$ as the active set and
$S_H$ as the \emph{decoded} inactive set, an error occurs if 
$S \cap S_H \neq \{\emptyset\}$; 
(c) We find the relationships between $M$, $N$, $L$ and $K$ that will drive the average probability of error to zero.
Section \ref{DecSch1} describes the straightforward decoding 
scheme where we find the inactive variables by first isolating the 
active set followed by choosing the inactive set randomly from the
complement set. This is followed by the analysis of a new decoding scheme we propose
in Section \ref{DecSch2}, where we directly search for an inactive subset
of the required cardinality.


\subsection{Decoding scheme 1: Look into the Complement Set} \label{DecSch1}
One way to find a set of inactive (or non-defective) variables
is to first decode the active (defective) set and then pick a set of $L$ variables
uniformly at random from the complement set. 
Here, we employ maximum likelihood based optimal decoding \cite{Atia_BooleanCS} to find the active set. 
Intuitively, even if we choose a wrong active set, 
there is still a nonzero probability of picking a correct inactive set, since there remain only a few
active variables in the complement set.
We refer to this decoding scheme as the ``indirect'' decoding scheme.
The probability of error in identifying the active set was 
analyzed in \cite{Atia_BooleanCS}. The error probability when the same 
decoding scheme is employed to identify a inactive subset is similar, 
with an extra term to account for the probability of picking an incorrect
set of $L$ variables from the complement set. 
For this
decoding scheme, we present the following result, without proof, as a corollary 
to (Lemma III.I, \cite{Atia_BooleanCS}).

  \begin{corollary} \label{corr1_indir}
Let $N$, $M$, $L$ and $K$ be as defined above. 
For any $\rho \in [0~1]$, with the above decoding scheme, the average 
probability of error, $P_e$, in finding $L$ inactive variables is upper bounded as
\begin{align} \label{eq:nonasym_pe_cond0}
  P_e  & \le \max_{1 \le j \le K} \exp \left \{ 
    - \left ( M E_0(\rho,j,1) - { \rho \log \left[{N-K \choose j}C_0(j)\right] } - 
    \log \left[K {K \choose j} \right]  \right ) \right \},
\end{align}
where
$ C_0(j) \triangleq 1 - \mathop{\prod}_{l=0}^{L-1} \lb 1 - \frac{j}{N-L-l} \rb$
denotes the probability of error in choosing a set of $L$ inactive variables 
uniformly at random from a set of $N-K$ variables containing $j$ active variables.
\end{corollary}
From above, by lower bounding $E_0(\rho, j, 1)$ for any specific signal model,
we can  obtain a bound that gives us the sufficient number of observations to find a set 
of $L$ inactive variables. We obtain the corresponding bound in the context of
non-adaptive group testing in Section~\ref{sec_group_test} (see Corollary~\ref{corr_suff_gt_indirect}).
Since $C_0 \le 1$, this bound
is tighter than the bound obtained by using the same number of observations 
as is required to find the active set \cite{Atia_BooleanCS}. 


\subsection{Decoding Scheme 2: Find the Inactive Subset Directly}
\label{DecSch2}
For simplicity of exposition, we describe
this decoding scheme in two stages:
First, we present the result for the $K=1$ case, i.e., when there is only 
one active variable.
This case brings out
the fundamental difference between finding active and inactive variables.
We then generalize our decoding scheme to $K > 1$.
\subsubsection{The $K=1$ Case} \label{sec_K1case}
We start by proposing the following decoding scheme:
\begin{itemize}
  \item Given $\{\matX, \uy\}$, compute $P(\uy | \ux_i)$ for all $i \in [N]$ 
   and sort them in descending order. Since $K=1$, we know 
   $P(Y|X_i)$ for all $i \in [N]$, and
   hence $P(\uy | \ux_i)$ can be computed using the independence assumption across 
   different observations.
 \item Pick the last $L$ indices in the sorted array as the set of $L$ inactive variables.
\end{itemize}
Note that, in contrast to finding active set, the problem of finding $L$ inactive variables 
does not have unique solution (except for $L=N-K$). 
The proposed decoding scheme provides a way to pick a solution, and the probability 
of error analysis take into account the fact that an error event happens only when the 
inactive set chosen by the decoding algorithm contains an active variable. 
\begin{theorem} \label{Suff_K1Thm}
Let $N$, $M$, $L$ and $K$ be as defined above. Let $K=1$. 
Let $E_0$ and $I^{(j)}$ be as defined in (\ref{eq:E0_def}) and (\ref{eq:MI_gen_def}).
Let $\rho \in [0~ 1]$.
With the above decoding scheme, the average probability of error, $P_e$, in 
finding $L$ inactive variables is upper bounded as
\begin{align} \label{eq:pe_k1case}
  P_e \le \exp \left [ - \left ( M E_0(\rho, 1, N-L) - {\rho \log {N-1 \choose L-1}}
   \right ) \right ].
\end{align}
Further, for any $\epsilon_0 > 0$, if
\begin{align} \label{eq:suff_cond1}
  M &> (1 + \epsilon_0) \frac{\log {N - 1 \choose L-1} }{(N - L) I^{(1)}},
\end{align}
then there exists $\epsilon_1 > 0$, independent of $N$ and $L$,  such that 
$P_e \le \exp\lb -\epsilon_1 \log {N-1 \choose L-1} \rb$. 

\noindent \emph{Proof:} See Sec. \ref{sec_proof_suffkeq1}. 

\end{theorem}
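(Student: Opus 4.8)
The plan is to recast the $K=1$ decoder's error event as a list-decoding failure and then run a Gallager-style random-coding argument. Write the single true active index as $s$, and for each $i\in[N]$ let $U_i \triangleq P(\uy\mid\ux_i)$ be the likelihood score on which the decoder sorts. Since the decoder declares the $L$ smallest scores inactive, an error occurs precisely when $s$ lands among the bottom $L$, which forces at least $N-L$ of the $N-1$ inactive columns to satisfy $U_j\ge U_s$. Hence the error event is contained in $\{\text{at least } N-L \text{ inactive items } j \text{ have } U_j\ge U_s\}$, i.e.\ a list-decoding failure with effective list size $n\triangleq N-L$. By the i.i.d.\ symmetry of the columns this analysis is identical for every choice of active index, so the average over the active set is immediate and I may fix $s$ throughout.

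Next I would condition on the pair $(\uy,\ux_s)$. Because $K=1$, the output depends only on $\ux_s$, so the remaining columns $\{\ux_j\}_{j\ne s}$ are i.i.d.\ with law $Q$ and independent of $\uy$; given $(\uy,\ux_s)$ the indicators $\mathbb{I}\{U_j\ge U_s\}$ are therefore i.i.d.\ Bernoulli with common parameter $q(\uy,\ux_s)\triangleq \Pr_{\ux\sim Q}\{P(\uy\mid\ux)\ge P(\uy\mid\ux_s)\}$. A union bound over the ${N-1 \choose n}$ candidate offending lists, combined with this conditional independence, gives the conditional error bound ${N-1 \choose n}\, q(\uy,\ux_s)^{\,n}$. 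The crucial move is then to apply Gallager's inequality $\min(1,x)\le x^{\rho}$ for $\rho\in[0,1]$, yielding the conditional bound ${N-1\choose n}^{\rho} q(\uy,\ux_s)^{\,\rho n}$; it is exactly this step that raises the combinatorial factor only to the power $\rho$, matching ${N-1\choose L-1}^{\rho}$ after using ${N-1\choose N-L}={N-1\choose L-1}$.

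It then remains to bound $\mathbb{E}_{\uy,\ux_s}\!\big[q(\uy,\ux_s)^{\,\rho n}\big]$. I would tilt the indicator inside $q$ via the Chernoff bound $\mathbb{I}\{P(\uy\mid\ux)\ge P(\uy\mid\ux_s)\}\le\big(P(\uy\mid\ux)/P(\uy\mid\ux_s)\big)^{\lambda}$ and choose $\lambda=1/(1+\rho n)$ so that, after averaging against the measure $Q(\ux_s)P(\uy\mid\ux_s)$, the stray power $P(\uy\mid\ux_s)^{\,1-\lambda\rho n}$ on the true column carries the same exponent $\lambda$ as the tilted decoy sum. The $M$ observations are i.i.d., so the expectation factorizes across the rows of $\matX$ and each letter collapses to $\sum_{Y}\big(\sum_{X}Q(X)P(Y\mid X)^{1/(1+\rho n)}\big)^{1+\rho n}=\exp[-E_0(\rho,1,n)]$ with $n=N-L$. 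Assembling the pieces gives $P_e\le {N-1\choose L-1}^{\rho}\exp[-M E_0(\rho,1,N-L)]$, which is (\ref{eq:pe_k1case}).

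Finally, for the sufficient condition (\ref{eq:suff_cond1}) I would study the exponent $f(\rho)\triangleq E_0(\rho,1,N-L)-\tfrac{\rho}{M}\log{N-1\choose L-1}$ near $\rho=0$. Since $E_0(0,1,N-L)=0$ and, by the stated identity $\tfrac{dE_0(\rho,1,n)}{d\rho}\big|_{\rho=0}=n I^{(1)}$, one has $f(0)=0$ and $f'(0)=(N-L)I^{(1)}-\tfrac1M\log{N-1\choose L-1}$. The hypothesis $M>\log{N-K\choose L-1}/[(N-L)I^{(1)}]$ is precisely $f'(0)>0$, so by continuity there is a $\rho\in(0,1]$ with $f(\rho)>0$; taking $\epsilon=f(\rho)$ yields $P_e\le\exp(-M\epsilon)$, while the concavity of $E_0$ in $\rho$ ensures the optimization is well behaved. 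The main obstacle I anticipate is the bookkeeping in the tilting step — forcing the list size $n=N-L$ to surface as the parameter of $E_0$ and checking that $\lambda=1/(1+\rho n)$ is exactly what equalizes the true-column and decoy-column exponents — rather than any conceptual difficulty.
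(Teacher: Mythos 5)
Your proposal is correct and follows essentially the same route as the paper's proof: the same containment of the error event in the existence of an $(N-L)$-sized set of inactive columns with likelihood at least that of the active column, the same union bound over ${N-1 \choose L-1}$ such sets combined with Gallager's $\min(1,x)\le x^{\rho}$ device, the same Chernoff tilt with $s=\lambda=1/(1+\rho(N-L))$ collapsing the expectation to $\exp[-ME_0(\rho,1,N-L)]$, and the same derivative-at-zero argument ($F(0)=0$, $F'(0)>0$ under \eqref{eq:suff_cond1}, mean value theorem) for the exponential decay. Your only deviation—computing the exact conditional exceedance probability $q(\uy,\ux_s)$ first and tilting it afterwards, rather than tilting inside the per-set bound as the paper does—is a cosmetic reordering of the same steps.
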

\noindent We make the following observations:
\begin{enumerate}[(a)]
\item \label{k1_1} 
  Figure~\ref{figure:suff_bnd_compare} compares
  the above bound on the number of observations with the bounds for the decoding
  scheme presented in  Section~\ref{DecSch1}\footnote{We refer the reader to the remark at the end of 
    the proof for Theorem~\ref{Suff_K1Thm} (Section~\ref{sec_proof_suffkgt1}) for 
    a bound on the sufficient number of observations, resulting from Corollary~\ref{corr1_indir},
  corresponding to $K=1$ case.} and in Theorem III.I\cite{Atia_BooleanCS}, 
  for the $K=1$ case.
  \item \label{k1_2} 
  Consider the case $L=N-1$, i.e., we want to find all the inactive variables.
  This task is equivalent to finding the active variable.
  The above decoding scheme for finding $N-1$ inactive variables is
  equivalent
  \footnote{The decoding schemes are equivalent in the sense that an  error in finding $K$ active variables implies an error in finding $N-K$ inactive variables, and vice-versa.}
  to the maximum likelihood criterion based decoding scheme 
  used in Theorem III.I in \cite{Atia_BooleanCS} 
  for finding $1$ active variable. 
  This is also reflected in the above result, as the number of observations 
  sufficient for finding $N-1$ inactive variables matches exactly with the 
  number of observations sufficient for finding $1$ active variable 
  (see Figure~\ref{figure:suff_bnd_compare}).
\end{enumerate}
%
\subsubsection{$K > 1$ Case} \label{sec_kgt1_case}
For $K > 1$, by arranging $P(\uy | \matX_{S_i})$ 
in decreasing order for all $S_i \subset [N]$ such that $|S_i| = K$, it is possible for
the sets $S_i$ towards the end of the sorted list to have overlapping entries.
Thus, in this case the decoding algorithm proceeds by picking up just the sufficient
number of $K$-sized sets from the end that provides us with a set of $L$ inactive variables.
We propose the following decoding scheme:


\emph{Decoding Scheme}:
%

\begin{enumerate}
  \item Given $\{\matX, \uy\}$, compute $P(\uy | \matX_{S_i})$ for all 
    $S_{i} \subset [N]$ such that $|S_{i}| = K$, and  sort these in descending order. Let the ordering
    be denoted by $\{S_{i_1}, S_{i_2}, \ldots, S_{i_{N \choose K}}\}$.
  \item Choose $n_0$ sets from the end such that
    \begin{equation} \label{eq:minimum_llr_sets}
      \vert \mathop{\bigcup}_{l=1}^{n_0} S_{i_{ {N \choose K} - l + 1}} \vert \ge L ~ ~ ~ \text{and} ~ ~ ~
      \vert \mathop{\bigcup}_{l=1}^{n_0-1} S_{i_{ {N \choose K} - l + 1}} \vert <  L.
    \end{equation}
  \item    Let $\Omega_{\text{last}} \triangleq \{i_{N \choose K}, i_{ {N \choose K}-1}, \ldots, i_{ {N \choose K}-n_0+1} \}$ 
    denote this set of last $n_0$ indices. Declare 
    $S_H \triangleq {\bigcup}_{j \in \Omega_{\text{last}}} S_{j} $ as the decoded set of inactive variables.
\end{enumerate}
That is, choose the minimum number of $K$-sized sets with least likelihoods such that
we get $L$ distinct variables and declare these as the decoded set of inactive variables.
We refer to this decoding scheme as the ``direct'' decoding scheme.
We note that $S_H$ might contain more than $L$ items. In particular, $L \le |S_H| \le L + K -1$. 
Further, for all values of $L$ such that $L < (N-K)-(K-1)$, the complement set of $S_H$, i.e., 
$[N]\backslash S_H$, will contain at least $L_0 \triangleq (N - L - 2K + 1)$ variables from 
inactive set ($[N]\backslash S_1$). This fact will be useful in achieving an upper bound on the decoding error probability 
for this algorithm.
We summarize the probability of error analysis of the above algorithm in
the following theorem. 
\begin{theorem} \label{Suff_Kgt1Thm1}
Let $N$, $M$, $L$ and $K$ be as defined above. 
Let $L_0 \triangleq (N - L - 2K + 1)$. 
For any $\rho \in [0~1]$ and
any $1 \le L < (N-K)-(K-1)$, with the above decoding scheme, the average 
probability of error, $P_e$, in finding $L$ inactive variables is upper bounded as
\begin{align} \label{eq:nonasym_pe}
 P_e  & \le \exp \left [ - \left \{ M E_0(\rho,1, L_0) - { \rho \log {N-K \choose L_0} }- 
   { \log \left[K {N - 1 - L_0 \choose K -1} \right]} \right \} \right ].
\end{align}
\end{theorem}
\noindent \emph{Proof:} See Sec.~\ref{sec_proof_suffkgt1}.

The above result is applicable to the abstract sparse signal model 
specified in Section~\ref{sec:PrbSetup}. 
It can be specialized to any particular sparse signal model, for example, that of non-adaptive group testing,  by lower bounding $E_0(\rho, 1, L_0)$, 
to obtain a relationship between $M$ and the average probability of error for the decoding algorithm.
We present the results for the case of the non-adaptive group testing in 
Section~\ref{sec_group_test}.

\begin{figure}[t]
\centering
\includegraphics[scale=0.7]{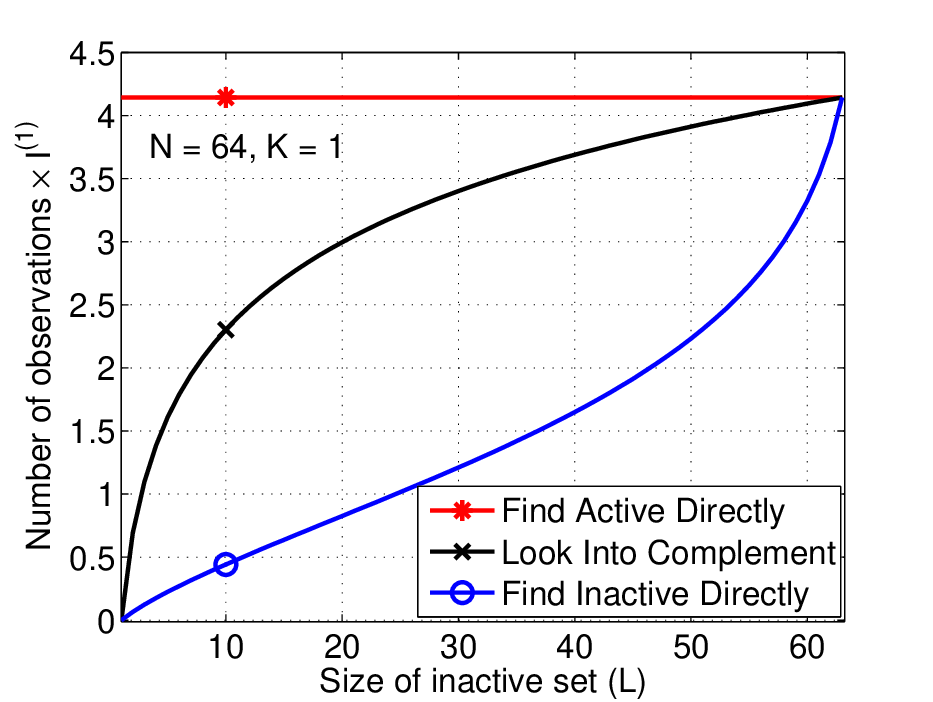}
\caption{Sufficiency bounds on the number of observations required to find $L$ 
  inactive variables for $K=1$ case.
The comparison  is presented with respect to the value of $M I^{(1)}$, as the
    application-dependent mutual information term $ I^{(1)}$ is common to all the bounds.
  The approach of finding the $L$ inactive variables directly, 
  especially for small values of $L$, requires
  significantly fewer number of observations compared to the approach of finding
  the inactive variables indirectly, after first identifying the active variables. 
The plot corresponding to the curve labeled \texttt{Find Active Directly} refers to the number of observations that are sufficient for finding the $K$ active variables\cite{Atia_BooleanCS}.}
\label{figure:suff_bnd_compare}
\end{figure}

\section{Necessary Number of Observations} \label{necc_cond}
In this section, we derive lower bounds on the number of observations
required to find a set of $L$ inactive variables, in the sense that if
the number of observations is lower than the bound, the probability of error will be bounded strictly away from
zero, regardless of the decoding algorithm used. Here, we need to lower bound the probability of error in choosing a set
of $L$ inactive variables. To this end, we employ an adaptation of 
Fano's inequality \cite{Gallager_bk,Cover_bk}. 

Let $\mathcal{I}^d \triangleq \left \{ \omega_1, \omega_2, \ldots, \omega_{N \choose K} 
\right \}$ be the collection of all $K$ sized subsets of $[N]$ such that $| S_{\omega_i}| = K$ for 
$i=1, 2, \ldots, {N \choose K}$. For each $\omega \in \mathcal{I}^d$ let us
associate a collection of sets, $\mathcal{I}^h_{\omega}\triangleq \left \{ \alpha_1, \alpha_2, \ldots, \alpha_{N-K \choose L} 
\right \}$, such that $| S_{\alpha_i} | = L $ and $
S_{\alpha_i} \cap S_{\omega} = \{ 0 \}$, $i=1, 2, \ldots , {N-K \choose L}$.
That is, $\mathcal{I}^h_{\omega}$ is the collection of all $L$-sized 
subsets of all-inactive variables when $S_\omega$ represents the active
set. 
Also, let $\mathcal{I}^H$ denote the set of all $L$-sized subsets of $[N]$.
Note that $|\mathcal{I}^H| = {N \choose L}$.
Given the observation vector, $\uy  \in \mc{Y}^M$, let 
$\phi:\mc{Y}^M \rightarrow \mathcal{I}^H$ 
denote a decoding function, such that $\hat{\alpha} = \phi(\uy)$ is the decoded
set of $L$ inactive variables.
Given an active set $\omega$ and an observation vector $\uy$, an error occurs if
$\hat{\alpha} \notin \mathcal{I}^h_\omega$. Define,
\begin{align} \label{eq:lb_pe_def}
P_e = P(\hat{\alpha} \notin \mathcal{I}^h_\omega).
\end{align}
Define a binary error RV, $E$, as
$E \triangleq \mathbb{I}_{\left \{ \hat{\alpha} \notin \mathcal{I}^h_{\omega} \right \} }$.
Note that $P_e = \text{Pr}(E = 1)$.
We state a necessary condition on the number of observations in the
following theorem.
\begin{theorem} \label{thm_necc_cond}
  Let $N$, $M$, $L$ and $K$ be as defined before. 
  Let $I^{(j)}$ and $P_e$ be as defined in (\ref{eq:MI_gen_def}) and
  (\ref{eq:lb_pe_def}), respectively. 
  A necessary condition on the number of observations $M$ required to 
  find $L$ inactive variables with asymptotically vanishing probability 
  of error, i.e., 
  $\lim_{N \rightarrow \infty} P_e = 0$, is given by
\begin{align} \label{eq:necc_cond_M}
  M &\ge \max_{1 \le j \le K} \frac{\Gamma_l(L, N, K, j)}
  {I^{(j)}}(1 - \eta), ~ ~ \mbox{where} ~ ~ ~
  \Gamma_l(L, N, K, j)  \triangleq \log \left[ \frac{ {N- K + j \choose j}}
  {{ N- K + j - L \choose j}} \right ],
\end{align}
for some $\eta > 0$.
\end{theorem}
\noindent \emph{Proof:} See Sec.~\ref{sec_proof_necc}. 

That is, any sequence of random codebooks, that achieves 
$\lim_{N \rightarrow \infty} P_e = 0$, must satisfy 
the above bound on the length of the codewords.
Given a specific application, we can bound $I^{(j)}$ for each $j=1,2, \ldots, K$, and obtain a
characterization on the necessary number of observations, as we show in the next section.


\section{Finding Non-Defective Items Via Group Testing} \label{sec_group_test}
In this section, we apply the above mutual information based 
results to the specific case of non-adaptive group testing, and characterize
the number of tests to identify a subset of non-defective items in a large population.
We consider a random pooling based, noisy non-adaptive group 
testing framework\cite{Atia_BooleanCS, du2006}, as described in detail in
Section~\ref{sec:PrbSetup}.
Our goal here is to find upper and lower bounds on the number of tests 
required to identify an $L$ sized subset belonging to $[N] \backslash \mathcal{G}$ 
using the observations $\underline{y}$, with vanishing probability of 
error as $N \rightarrow \infty$.
We focus on the regime where $K, L, N \rightarrow \infty$ with 
    $\frac{L}{N-K} \rightarrow \alpha_0$, $\frac{K}{N-K} \rightarrow \beta_0$ for some 
    fixed $\alpha_0, \beta_0 \in (0,1)$. 

To compute the lower bounds on the number of tests, using the results of
Theorem~\ref{thm_necc_cond}, we need to upper bound the mutual information term, $I^{(j)}$, for the 
group testing signal model given in (\ref{eq:gtmodel}).
Using the bounds on $I^{(j)}$\cite{johnson_gtbp}, with\footnote{In general,
$p = \frac{\alpha}{K}$, with $\alpha$ depending upon $u$ and $q$, 
is useful for bounding the mutual information 
terms $I^{(j)}$~\cite{johnson_gtbp, Atia_BooleanCS}.}
$p=\frac{1}{K}$ and $u \le 0.5$,
we summarize the order-accurate lower bounds on the number of tests to 
find a set of $L$ non-defective items in Table \ref{tab:tab_necc_order}.
A brief sketch of the derivation of these results is provided in
Appendix~\ref{order_res_proof}.

To compute the upper bounds on the number of tests, we need to lower bound $E_0(\rho,1,n)$ for some $\rho \in [0,1]$ 
and show that the negative exponent in the probability of error term in (\ref{eq:nonasym_pe}) can be made strictly greater 
than~$0$ by choosing $M$ sufficiently large. We first present the lower bounds on $E_0(\rho,1,n)$ in the following
lemma.
\begin{lemma} \label{E0LB_Thm}
Let $N$, $M$, $L$ and $K$ be as defined above. 
Let $L_0 = (N - L - 2K + 1)$. 
Let $E_0(\rho, j, n)$ be as defined in (\ref{eq:E0_def}) and 
define $\rho_0 \triangleq \frac{K-1}{L_0}$.
For the non-adaptive group testing model with $p=\frac{1}{K}$ and
for all values of $L \le (N-3K+1)$, we have
\begin{enumerate}[(a)]
  \item For the noiseless case ($u=0, q=0$):
\begin{align} \label{eq:e0_lb_nonoise}
  E_0(\rho_0, 1, L_0) \ge \frac{(1-e^{-1}) - (\frac{1}{K})^K}{e}
\end{align}
  \item For the additive noise only case ($u=0, q>0$):
    \begin{align} \label{eq:e0_lb_q}
      E_0(\rho_0, 1, L_0) \ge \frac{e^{-2}}{4}(1-q)
\end{align}
  \item For the dilution noise only case ($u>0, q=0$):
\begin{align} \label{eq:e0_lb_u}
  E_0(\rho_0, 1, L_0) \ge \frac{e^{-2}}{4}(1-u^{\frac{1}{K}})
\end{align}
\end{enumerate}
\end{lemma}
The proof of the above lemma is presented in Appendix~\ref{sec_proof_E0}.
For notational convenience, we 
let $E_0^{(lb)}$ denote a common lower bound on $E_0(\rho_0, 1, L_0)$, as derived above.
The following lemma presents an upper bound on the number of tests 
required to identify $L$ non-defective items in a non-adaptive group testing setup.
\begin{theorem} \label{lemma_suff_NN_KON}
  Let $P_e$ be the average probability of error in finding $L$ inactive variables
  under the decoding scheme described in Section~\ref{sec_kgt1_case}, which is
  upper bounded by (\ref{eq:nonasym_pe}).
  Let $L_0 \triangleq (N - L - 2K + 1)$ and let $\theta_0 \triangleq \frac{L+K-1}{N-K}$. 
  Then, for any $\epsilon_0 > 0$ and all values of $L \le (N-3K+1)$,
  if $M$ is chosen as
  \begin{align} \label{eq:suff_NN_KON}
    M > (1 + \epsilon_0) \frac{K-1}{E_0^{(lb)}}\left [
	    \frac{ H_b(\theta_0) }{1 - \theta_0} +
	    \log \left( 2 + \frac{L}{K-1}\right)  + 1 + 
	    \frac{ \log K }{K-1} \right],
  \end{align}
  then $P_e \le \exp\left(- \epsilon_0 (K-1) \log \frac{N-K}{L_0} \right)$.
\end{theorem}
An outline of the proof is presented in Section~\ref{proof_suff_NN_KON}.
In the regime where $L, K \rightarrow \infty$ 
as $N \rightarrow \infty$, it follows from the above lemma that $\lim_{N \rightarrow \infty} P_e = 0$.

Finally, we present an upper bound on the number of tests obtained for the indirect 
decoding scheme presented in Section~\ref{DecSch1} for the noiseless case.
Using~\cite[Lemma VII.$1$ and VII.$3$]{Atia_BooleanCS} to lower bound $E_0(\rho, j, 1)$ for
the noiseless case, and noting that, from the union bound, we have
$C_0(j) \le \frac{ j {N-K-1 \choose L-1 }}{ {N-K \choose L}} = j \frac{L}{N-K}$,
the following corollary builds on the results presented in Corollary~\ref{corr1_indir}.
\begin{corollary} \label{corr_suff_gt_indirect}
  Let $P_e$ be the average probability of error in finding $L$ inactive variables
  under the decoding scheme described in Section~\ref{DecSch1}, which is
  upper bounded by (\ref{eq:nonasym_pe_cond0}). 
  For any $\epsilon_0 > 0$, there exists absolute constant $c_0>2$, independent of $N$, $K$ and $L$, such 
  that if $M$ is chosen as
  \begin{align} \label{eq:suff_NN_KON_indir}
    M > (1+\epsilon_0) c_0 K \log L \log^2 K,
  \end{align}
  then $P_e \le \exp\left(-\epsilon_0 (K \log L) \right)$.
\end{corollary}
    We now make following observations about the results presented in this section.
    We consider the regime where for some fixed $\alpha_0, \beta_0 \in (0,1)$,
    $\frac{L}{N-K} \rightarrow \alpha_0$, $\frac{K}{N-K} \rightarrow \beta_0$ as $N \rightarrow \infty$.
    Further, as we are typically interested in $L \ge K$ and, also since our results apply 
    for $L \le N - 3K + 1$, we only consider $\alpha_0$, $\beta_0$ such that 
    $\beta_0 \le \alpha_0$ and $\alpha_0 + 2 \beta_0 \le 1$. For the indirect decoding scheme presented in Section~\ref{DecSch2}, we summarize the upper bounds on 
    the number of tests to find a set of $L$ non-defective items in Table~\ref{tab:tab_suff_order}.
\begin{enumerate}[(a)]
  \item 
    
    We first consider the noiseless case.
    \begin{enumerate}[(i)]
      \item For the direct decoding scheme, $O(K)$ number of tests are sufficient.
    In comparison, using results from Corollary~\ref{corr_suff_gt_indirect},    $O(K \log L \log^2 K)$ tests are sufficient for the indirect decoding scheme.
   Also, from \cite[Theorem~V.2]{Atia_BooleanCS}, 
    $O(K \log N \log^2 K)$ tests are sufficient for finding all the defective items.
    Thus, in this case, the direct decoding scheme for finding non-defective items performs better 
    compared to the indirect decoding schemes by a poly-log factor of the number of defective items, $K$.
    Further, from Table~\ref{tab:tab_necc_order}, we observe that the upper bound
    on the number of tests for the direct decoding scheme is within a $c \log K$ factor of the lower bound, where
    $c$ is a constant independent of $N$, $L$ and $K$.
  \item  The size of non-defective set, $L$, impacts the upper bound on the number of 
 tests only through $\alpha_0$, i.e., the fraction of non-defective items that need to be found. 
 From Table~\ref{tab:tab_suff_order},  $\Phi(\alpha_0, \beta_0)$ is an increasing function of 
 $\alpha_0$. That is, a higher $\alpha_0$ results in a higher rate at which the upper bound on the number
 of tests increases with $K$.

    \end{enumerate}

  \item Performance under noisy observations: 
    \begin{enumerate}[(i)]
      \item 
	For the additive noise, $O(\frac{K}{1-q})$ number of tests are sufficient for the direct 
	decoding scheme 
	The indirect scheme (as well as finding the defective items) also show similar $\frac{1}{1-q}$ factor
	increase in the number of tests under additive noise scenario (see, e.g., \cite[Theorem VI.2]{Atia_BooleanCS}). 
    Further, from Table~\ref{tab:tab_necc_order}, we observe that for fixed $\alpha_0, \beta_0$ and $q$, the upper bound
    on the number of tests for the direct decoding scheme is within a constant factor of the lower bound.
  \item 
    For dilution noise, $O\lb \frac{K}{1-u^\frac{1}{K}}\rb$ are sufficient for the direct
    decoding scheme.
    Another characterization for the sufficient number of tests for the direct decoding scheme, 
    based on the remark at the end of Appendix~\ref{sec_proof_E0}, is $O\lb \frac{K^2}{1-u^\frac{1}{2}}\rb$ 
    number of tests.
The direct decoding scheme shows high sensitivity to the dilution noise. This
    behavior is in sharp contrast to the indirect scheme, where the dilution noise parameter $u$ leads to an
    increase in the number of tests only by a factor of $\frac{c}{1-u}$
    (see, e.g., \cite[Theorem VI.5]{Atia_BooleanCS}). From Table~\ref{tab:tab_necc_order}, 
    the lower bounds also show an increase in the number of tests by a factor $\frac{1}{1-u}$ for the
    dilution noise scenario.
    The conservativeness of the upper bound for the direct decoding scheme in the presence of dilution noise appears 
    to be due to the following factors: (a) The lower bound on $E_0$ is $\Omega(\frac{1}{K})$, which underscores the general fact
    that the group testing system is more sensitive to the diluton noise, and (b) The term $ \log {N-1-L_0 \choose K-1}$ 
    in (\ref{eq:nonasym_pe}), which might be due to the particular decoding scheme employed or the
    specific technique employed in bounding the error exponent.
\end{enumerate}
\end{enumerate}

\begin{table}[t]
\centering
\caption{Finding a subset of $L$ non-defective items: Results for necessary number 
of group tests which hold asymptotically as $(N,K,L) \rightarrow \infty$,
$\frac{L}{N-K} \rightarrow \alpha_0$  with $0 < \alpha_0 < 1$. The constants $C_n, C_n', C_n'' > 0$ are
independent of $N, L, K, u$ and $q$.}
\begin{tabular}{|l|l|} \hline 
  No Noise    & $\dfrac{C_n K}{\log K}\log \dfrac{1}{[1 - \alpha_0 + o(1)]}  $ \\ \hline
    Additive Noise    & $\dfrac{C_n' K}{\log \frac{1}{q}}\log \dfrac{1}{[1 - \alpha_0 + o(1)]} $ \\ \hline
    Dilution Noise    & $\dfrac{C_n'' K}{(1-u)\log K }\log \dfrac{1}{[1 - \alpha_0 + o(1)]} $ \\ \hline
\end{tabular}
\label{tab:tab_necc_order}
\end{table}


\begin{table}[t]
\centering
\caption{Finding a subset of $L$ non-defective items: Results for sufficient number 
of group tests which hold asymptotically as $(N,K,L) \rightarrow \infty$,
$\frac{L}{N-K} \rightarrow \alpha_0$ and $\frac{K}{N-K} \rightarrow \beta_0$  with $0 < \beta_0 \le \alpha_0 < 1$ such
that $\alpha_0 + 2 \beta_0 < 1$.
Define $\Phi(\alpha_0, \beta_0) \triangleq \lb \frac{H_b(\gamma_0)}{1 - \gamma_0} + \log \frac{\alpha_0}{\beta_0} \rb$,
where $\gamma_0 = \alpha_0 + \beta_0$. The constants $C_s, C_s', C_s'' > 0$ are
independent of $N, L, K, u$ and $q$.}
\begin{tabular}{|l|l|} \hline 
  No Noise & $ \dfrac{C_s K}{(1 - o(1))} \ls \Phi(\alpha_0, \beta_0) + o(1)\rs  $ \\ \hline
  Additive Noise & $ \dfrac{C_s' K}{(1-q)} \ls \Phi(\alpha_0, \beta_0) + o(1)  \rs  $ \\ \hline
  Dilution Noise & $ \dfrac{C_s'' K}{(1-u^{\frac{1}{K}})} \ls \Phi(\alpha_0, \beta_0) + o(1)\rs  $ \\ \hline
\end{tabular}
\label{tab:tab_suff_order}
\end{table}

\section{Proofs of the Main Results} \label{sec_thm_proofs}
We now present the proofs of Theorems \ref{Suff_K1Thm}, \ref{Suff_Kgt1Thm1} and \ref{thm_necc_cond}, which constitute the main results in this paper. 
\subsection{Proof of Theorem \ref{Suff_K1Thm}: Sufficient Number of Observations, $K=1$ Case}
\label{sec_proof_suffkeq1}
At the heart of the proof of this theorem is the derivation of
an upper bound on the average
probability of error in finding $L$ inactive variables using the decoding 
scheme described in Section~\ref{sec_K1case}. In turn, the upper bound is obtained by characterizing the error 
exponents on the average probability of error~\cite{Gallager_bk}. 
Without loss of generality, due to the symmetry in the model, we can assume that the RV
$X_1$ is active.
Given that $X_1$ is the active variable, the decoding algorithm 
will make an error if $P(\uy | \matX_1)$ falls within the last 
$L$ entries of the sorted array generated as described in the
decoding scheme.
Let $\uy$ be the observed output, and let $\mc{E}$ denote the event that an error has occurred, 
when item $X_1$ is the active variable and $\matX_1$ is the first column of $\matX$.
Further, let $\text{Pr}(\mc{E})$ be a shorthand for  
$\text{Pr}\{ \text{error}| X_1\text{ is active}, \matX_1, \uy\}$.
The overall average probability of error, $P_e$, can be expressed as
\begin{align} \label{eq:k1_pe_1}
  P_{e} &= \sum_{\uy, \matX_1}  P(\uy | \matX_1) Q(\matX_1) 
  \text{Pr}(\mc{E}).
\end{align}
Let $S_z \subset [N] \backslash 1$ be a set of $N-L$ items, i.e., $|S_z| = N-L$.
Let $\mathbf{\mc{S}_z}$ denote the set of all possible $S_z$.
Further, let $\mc{A}_{S_z} \subset \{\matX_{S_z}\}$ be such that, 
$\mc{A}_{S_z} = \{ \{\matX_{S_z}\}: P(\uy|\matX_j) ~ \geq ~ P(\uy|\matX_1) 
~ \forall ~ j \in S_{z} \}$. 
That is, $\mc{A}_{S_z}$ represents all those realizations of the
random variable $\matX_{S_z}$ which satisfy the above condition, which
states that each variable in $S_z$ is more likely than the active variable, $X_1$.
It is easy to see that
$\mc{E} \subset \mc{A} \triangleq \bigcup_{{S_z} \in \mc{S}_z } \mc{A}_{S_z}$, i.e., an error event implies
that there exists at least one set of $N-L$ variables, $S_z$, such that
$P(\uy|\matX_j) ~ \geq ~ P(\uy|\matX_1) ~ \forall ~ j \in S_{z}$.
Thus, $\text{Pr}(\mc{E}) \le \text{Pr}(\mc{A} )$. 
Let $s$ be an optimization variable such that $0 \le s \le 1$.
The following set of inequalities upper bound $\text{Pr}(\mc{E})$:
\restarthintedrel
\begin{align}
  \nonumber
  \text{Pr}(\mc{E}) &\le \sum_{S_z \in \mathbf{\mc{S}_z}} ~
  \sum_{ \matX_{S_z} \in\mc{A}_{S_z}} Q(\matX_{S_z}) \\
  \nonumber
  &\hintedrel[in1]{\le}\sum_{S_z \in \mathbf{\mc{S}_z}} ~ 
  \sum_{ \matX_{S_z} \in\mc{A}_{S_z}} Q(\matX_{S_z})
  \prod_{j \in S_z} \left [ \frac{P(\uy|\matX_j)}{P(\uy|\matX_1)} \right ]^s \\
  \nonumber
  &\hintedrel[in2]{\le} \sum_{S_z \in \mathbf{\mc{S}_z} } ~
  \sum_{\matX_{S_z} } ~ 
  \prod_{j \in S_z} Q(\matX_j) \left [ \frac{P(\uy|\matX_j)}{P(\uy|\matX_1)} \right ]^s \\
  \nonumber
  &\hintedrel[in3]{=} \sum_{S_z \in \mathbf{\mc{S}_z} } ~
  \prod_{j \in S_z} ~ \sum_{\matX_j} Q(\matX_j) \left [ \frac{P(\uy|\matX_j)}{P(\uy|\matX_1)} 
    \right ]^s \\
   \label{eq:k1uppbndq}
    &\hintedrel[in4]{=} {N-1 \choose L-1}
  \left \{ \sum_{\matX_j} Q(\matX_j) \left [ \frac{P(\uy|\matX_j)}{P(\uy|\matX_1)} 
  \right ]^s \right \}^{N-L}.
\end{align}
In the above, (\hintref{in1}) follows since we are multiplying 
with terms that are all greater than $1$ and  
$(\hintref{in2})$ follows since we are adding extra nonnegative terms by summing over all $\matX_{S_z}$. 
$(\hintref{in3})$ follows by using the independence of the codewords, i.e.,  $Q(\matX_{S_z}) = \prod_{j \in S_z} Q(\matX_j)$, 
and simplifying further.
$(\hintref{in4})$ follows since the value of the expression inside the product term does not depend upon
any particular~$j$.

Let $0 \le \rho \le 1$.  If the R.H.S.\ in (\ref{eq:k1uppbndq}) is less than
$1$, then raising it to the power $\rho$ makes it bigger, and if it is 
greater than $1$, it remains greater than $1$ after raising it to the power $\rho$.
Thus, we get the following upper bound on $\text{Pr}(\mc{E})$:\footnote{This is a standard 
Gallager bounding technique~\cite[Section~5.6]{Gallager_bk}.}
\begin{align} \label{eq:k1rhofinal}
  \text{Pr}(\mc{E}) &\le {N-1 \choose L-1}^{\rho}
  \left \{ \sum_{\matX_j} Q(\matX_j) \left [ \frac{P(\uy|\matX_j)}{P(\uy|\matX_1)} 
  \right ]^s \right \}^{\rho(N-L)}.
\end{align}
Substituting this into (\ref{eq:k1_pe_1}) and simplifying, we get
\begin{align} \label{eq:k1_opts_eqn}
  P_e \le {N-1 \choose L-1}^{\rho} \sum_{\uy} 
  \sum_{\matX_1} Q(\matX_1) P(\uy | \matX_1)^{1 - \rho(N-L)s}
  \left \{ \sum_{\matX_j} Q(\matX_j) P(\uy|\matX_j)^s
  \right \}^{\rho(N-L)}.
\end{align}
Putting $s = 1/(1 + \rho(N-L))$, we get
\begin{align} \label{eq:k1_opts_eqn_1}
  P_e \le {N-1 \choose L-1}^{\rho} \sum_{\uy} 
  \left \{ \sum_{\matX_j} Q(\matX_j) P(\uy|\matX_j)^{\frac{1}{1 + \rho(N-L)}}
  \right \}^{1 + \rho(N-L)}.
\end{align}
Finally, using the independence across observations and using the definition of 
$E_0 (\rho, j, n)$ from (\ref{eq:E0_def}) with $j=1$ and $n = N-L$, we get
\begin{align} \label{eq:k1_opts_eqn_2}
  \nonumber
  P_e &\le {N-1 \choose L-1}^{\rho} \left [ \sum_{Y \in \mathcal{Y}} 
    \left \{ \sum_{X_j \in \mathcal{X}} Q(X_j) P(Y|X_j)^{\frac{1}{1 + \rho(N-L)}}
\right \}^{1 + \rho(N-L)} \right]^M \\
&= \exp^{\left [ -M F(\rho) \right]}, ~ \mbox{where} ~ 
F(\rho) = E_0(\rho,1,N-L) - \frac{\rho \log {N-1 \choose L-1}}{M}.
\end{align}
Hence (\ref{eq:pe_k1case}) follows. 

For the following discussion, we treat $F$ and $E_0$ as functions of $\rho$ only
and all the derivatives are with respect to $\rho$.
Note that $ F^{'}(\rho) = E_0^{'}(\rho) - \frac{ \log {N-1 \choose L-1}}{M}$.
It is easy to see that $E_0(0) = 0$ and hence $F(0) = 0$. 
With some calculation, we get,
\begin{align}
  \left. E_0^{'}(\rho) \right|_{\rho = 0} = (N-L) \sum_{Y,X} P(Y,X) \log \frac{P(Y|X)}{P(Y)} = (N-L) I^{(1)}.
\end{align}
Using the Taylor series expansion of $E_0(\rho)$, and following similar
analysis as in \cite[Section~III.D]{Atia_BooleanCS}, it is easy to show that there exists a 
$\rho \in (0,1]$, sufficiently small, such that if $M$ is 
chosen as in (\ref{eq:suff_cond1}), then $M F(\rho) > \epsilon_1 (N-L) \log {N-1 \choose L-1}$ for some $\epsilon_1 > 0$,
independent of $N$ and $L$.
This completes the proof.

\emph{Remark:} For the decoding scheme described in \ref{DecSch1}, for the $K=1$ case,
using similar arguments as the above, if $M > (1+\epsilon_0) \frac{\log (L-1)}{I^{(1)}}$
for any $\epsilon_0 > 0$, then
there exists $\epsilon_1 > 0$, and independent of $N$ and $L$,  such 
that $P_e \le \exp (- \epsilon_1 \log L)$, i.e., $P_e \rightarrow 0$, as
$L \rightarrow \infty$.

\subsection{Proof of Theorem \ref{Suff_Kgt1Thm1}: Sufficient Number of Observations, $K > 1$ Case}
\label{sec_proof_suffkgt1}
The decoding algorithm outputs a set, $S_H$, of  at least $L$ 
inactive variables. A decoding error happens if the set $S_H$ contains one or more variables from
the active set. We now upper bound the average probability of error of the proposed decoding algorithm.
The probability is averaged over all possible instantiations of $\{\matX, \uy \}$ as 
well as over all possible active sets.
By symmetry of the codebook ($\matX$) construction, the average probability of error is the same for 
all the active sets.  Hence, we fix the active set and then compute average probability of error 
with this set. Let $S_1 \subset [N]$ be the active set such that $|S_1| = K$.
We also define the following notation: 
For any set $S_{\omega} \subset [N]$ such that $|S_{\omega}| = K$ and for any item $j \in S_{\omega}$,
let $S_{\omega j^c} \triangleq S_{\omega} \backslash j$. Note that $|S_{\omega j^c } | = K-1$.

For any $d \in S_1$, define $\mc{E}_d$ to be the error event such that $d$ belongs to $S_H$.
The overall average probability of error, $P_e$, in finding $L$ inactive
variables can thus be upper bounded as
\begin{align} \label{eq:overallpe_eq}
  P_e \le \sum_{d \in S_1} \text{Pr}({\mc{E}_d}). 
\end{align}
Further, 
\begin{equation} \label{eq:pe_ed}
\text{Pr}({\mc{E}_d}) = \sum_{\uy} \sum_{\matX_{S_{1}}}
  P(\uy | \matX_{S_1})  Q(\matX_{S_1})  \left[ 
  \text{Pr} \{ \mc{E}_d | S_1~ \mbox{is the active set}, \uy, \matX_{S_1} \} \right].
\end{equation}

We now upper bound $\text{Pr} \{ \mc{E}_d | S_1~ \mbox{is the active set}, \uy, \matX_{S_1} \}$.
Let $S_z \subset [N] \backslash S_1$ be such that $|S_z| = L_0$.
Let $S_{\omega} \subset [N]$ be a $K$ sized index set such that $S_\omega = \{d \cup S_{\omega d^c} \}$, where 
$S_{\omega d^c} \subset [N]\backslash \{d\} \backslash S_z$ and $d \in S_1$.
Further, let $\mathbf{\mc{S}_z}$ and $\mathbf{\mc{S}_{\omega d^c}}$ be the collection of
all possible sets $S_z$ and $S_{\omega d^c}$, respectively. 
It is easy to see that $|\mathbf{\mc{S}_z}| = {N-K \choose L_0}$ and 
$|\mathbf{\mc{S}_{\omega d^c}}| = {N-1 - L_0 \choose K-1}$.
With $S_1$ as the active set, $d \in S_1$, the observed output $\uy$ and the codebook entries corresponding 
to set $S_1$ as $\matX_{S_1}$, 
define $\mc{A}_d({S_z, S_{\omega d^c}}) \subset \{\matX_{S_{z} \cup S_{\omega d^c}}\}$ and $\mc{A}_d$ as follows:
\begin{align} \label{eq:Adef}
  \mc{A}_d({S_z, S_{\omega d^c}}) &= \{ \{\matX_{S_{z}}, \matX_{S_{\omega d^c}}\} : ~
  P(\uy|\matX_{\alpha}, \matX_{S_{\omega d^c}}) \geq 
P(\uy|\matX_d, \matX_{S_{\omega d^c}}) ~ 
\forall ~ {\alpha} \in S_{z}  \}, \\
\mc{A}_d &= \bigcup_{S_z \in \mc{S}_z} \bigcup_{S_{\omega d^c} \in \mc{S}_{\omega d^c}} 
\mc{A}_d({S_z, S_{\omega d^c}}).
\end{align}
That is, $\mc{A}_d(S_z, S_{\omega d^c})$ represents a set of the those realizations of the random variables
$\matX_{S_z}$ and $\matX_{S_{\omega d^c}}$ which satisfy the condition in (\ref{eq:Adef}).
\begin{proposition} \label{prop_err_up_bnd}
  $\text{Pr} \{ \mc{E}_d | S_1~ \mbox{is the active set}, \uy, \matX_{S_1} \} \le \text{Pr}(\mc{A}_d)$
\end{proposition}
\begin{proof}
  We will show that given the active set $S_1$, $d \in S_1$, $\uy$ and $\matX_{S_1}$, 
  the event $\{d \in S_H \}$, i.e., the decoded set of inactive variables contains $d$, implies
the event  $\mc{A}_d$. 
We first note that, since $|S_H| \le L+K-1$, there exists a set of $L_0=N-K-(L + K-1)$ inactive variables 
that do not belong to $S_H$.
Let $S_z \subset [N] \backslash S_1$ be such a set of inactive variables such that $|S_z| = L_0$ and
$S_z \cap S_H = \{ \emptyset \}$.

Further, since $d \in S_H$, this implies that there exits an $\omega \in \Omega_{\text{last}}$ 
such that $d$ belongs to $S_{\omega}$, where $\Omega_{\text{last}}$ is as defined in the decoding
scheme for $K > 1$ (see Section~\ref{sec_kgt1_case}).
With the notation described above, we can represent such $S_\omega$ as $\{d \cup S_{\omega d^c}\}$, where
$S_{\omega d^c} \subset [N] \backslash \{d\} \backslash S_z$ such that $|S_{\omega d^c}| = K-1$.
For any $\alpha \in S_z$, if we replace $d \in S_{\omega}$ with $\alpha$ and evaluate
$P(\uy|\matX_{\alpha}, \matX_{S_{\omega d^c}})$, it cannot be smaller than
$P(\uy|\matX_d, \matX_{S_{\omega d^c}})$ or else the decoding algorithm would have chosen $\alpha$
as belonging to $S_H$.
This implies that, there exists a realization of $\matX_{S_z}$ and $\matX_{S_{\omega d^c}}$ such that 
$P(\uy|\matX_{\alpha}, \matX_{S_{\omega d^c}}) \geq P(\uy|\matX_d, \matX_{S_{\omega d^c}}) ~ 
\forall ~ {\alpha} \in S_{z}$, i.e., ${\mc{A}_d}$ occurs. 
\end{proof}
We now upper bound $\text{Pr}(\mc{A}_d)$ as follows:
\begin{align} \label{eq:pA_bnd1}
  \text{Pr}(\mc{A}_d) \le \sum_{S_z \in \mathbf{\mc{S}_z}} ~  
  \sum_{S_{\omega d^c} \in \mathbf{\mc{S}_{\omega d^c}}} ~ q_d,
\end{align}
where $q_d \triangleq \text{Pr}\{ \mc{A}_d(S_z, S_{\omega d^c}) | S_1~ \mbox{is active set}, \uy, 
\matX_{S_1} \}$.
Here, the randomness comes from 
the set of variables in $S_z$ and $S_{\omega d^c}$, i.e., $\matX_{S_z}$ and $\matX_{S_{\omega d^c}}$.
  Let $s$ be such that $0 \le s \le 1$. We have
\restarthintedrel
\begin{align}
  \nonumber
  q_d &= \sum_{\matX_{S_z}, \matX_{S_{\omega d^c}} \in \mc{A}_d({S_z, S_{\omega d^c}})}
  Q(\matX_{S_z}, \matX_{S_{\omega d^c}}) \\
  \nonumber
  &\hintedrel[ina]{\le} \sum_{\matX_{S_{\omega d^c}},\matX_{S_z} \in \mc{A}_d({S_z, S_{\omega d^c}})}
  Q(\matX_{S_z}, \matX_{S_{\omega d^c}})
  \prod_{S_{\alpha} \in S_z} \left[ \frac{P(\uy|\matX_{\alpha}, \matX_{S_{\omega d^c}})}
    {P(\uy|\matX_{d}, \matX_{S_{\omega d^c}})} \right ]^s\\
    \nonumber
    &\hintedrel[inb]{\le} \sum_{\matX_{S_{\omega d^c}}}~ Q(\matX_{S_{\omega d^c}}) ~ 
    \sum_{ \matX_{S_z}} ~ Q(\matX_{S_z}) 
  \prod_{S_{\alpha} \in S_z} \left[ \frac{P(\uy|\matX_{\alpha}, \matX_{S_{\omega d^c}})}
    {P(\uy|\matX_{d}, \matX_{S_{\omega d^c}})} \right ]^s\\
    \nonumber
    &\hintedrel[inc]{=} \sum_{\matX_{S_{\omega d^c}}} Q(\matX_{S_{\omega d^c}})
    \prod_{l=1}^{L_0} \sum_{\matX_{S_{\alpha }}} Q(\matX_{S_{\alpha }}) 
    \left[ \frac{P(\uy|\matX_{\alpha}, \matX_{S_{\omega d^c}})}
    {P(\uy|\matX_{d}, \matX_{S_{\omega d^c}})} \right ]^s\\
    \nonumber
    &\hintedrel[ind]{=} \sum_{\matX_{S_{\omega d^c}}} Q(\matX_{S_{\omega d^c}})
    \left \{ \sum_{\matX_{S_{\alpha }}} Q(\matX_{S_{\alpha }}) 
    \left[ \frac{P(\uy|\matX_{\alpha}, \matX_{S_{\omega d^c}})}
      {P(\uy|\matX_{d}, \matX_{S_{\omega d^c}})} \right ]^s \right \} ^{L_0}\\
    \label{eq:q_bnd1}
    &= \sum_{\matX_{S_{\omega d^c}}} Q(\matX_{S_{\omega d^c}})
    \underbrace{
    \left \{ \sum_{\matX_{S_{\alpha }}} Q(\matX_{S_{\alpha }}) 
    \left[ \frac{P(\uy, \matX_{S_{\omega d^c}} |\matX_{\alpha})}
      {P(\uy, \matX_{S_{\omega d^c}}|\matX_{d} )} \right ]^s \right \} ^{L_0}
    }_{\triangleq \mc{P}_0(\uy, \matX_d, \matX_{S_{\omega d^c}})}.
\end{align}
In the above, (\hintref{ina})-(\hintref{ind}) follow using the same reasoning as in
(\ref{eq:k1uppbndq}) in the proof of Theorem \ref{Suff_K1Thm} (Section \ref{sec_proof_suffkeq1}).
We note that, due to symmetry in the construction of codebook, $\mc{P}_0(\uy, \matX_d, \matX_{S_{\omega d^c}})$
does not depend upon the index set $S_z$ or $\matX_{S_z}$. In fact, it depends only upon the
given realizations of $\matX_{S_{\omega d^c}}$, $\matX_d$ and not on the particular index sets
$S_{\omega d^c}$ and $d$, respectively.
Thus, from (\ref{eq:pA_bnd1}), and for some $0 \le \rho \le 1$, we get
\begin{align} \label{eq:pA_bnd2}
  \text{Pr}(\mc{A}_d) &\le \sum_{S_{\omega d^c} \in \mathbf{\mc{S}_{\omega d^c}}}
  \sum_{\matX_{S_{\omega d^c}}} Q(\matX_{S_{\omega d^c}})
  \left [ \sum_{S_z \in \mc{S}_z} \mc{P}_0(\uy, \matX_d, \matX_{S_{\omega d^c}}) \right ]\\
  &\le \sum_{S_{\omega d^c} \in \mathbf{\mc{S}_{\omega d^c}}}
  \sum_{\matX_{S_{\omega d^c}}} Q(\matX_{S_{\omega d^c}})
  \left [ \sum_{S_z \in \mc{S}_z} \mc{P}_0(\uy, \matX_d, \matX_{S_{\omega d^c}}) \right ]^{\rho}\\
  &\le {N - 1 - L_0 \choose K -1}
  \sum_{\matX_{S_{\omega d^c}}} Q(\matX_{S_{\omega d^c}})
  \left [ {N-K \choose L_0} \mc{P}_0(\uy, \matX_d, \matX_{S_{\omega d^c}}) \right ]^{\rho}.
\end{align}
The second inequality above follows since the expression inside the square 
brackets represents the probability of a union of events and thus, as in $K=1$ case, by raising it 
to a power $0 < \rho \le 1$, we still get an upper bound~\cite[Section~5.6]{Gallager_bk}.
Let $C_2 \triangleq {N-K \choose L_0}^\rho {N - 1 - L_0 \choose K -1}$.
Using proposition \ref{prop_err_up_bnd}, we substitute the above expression into (\ref{eq:overallpe_eq})
to get:
\restarthintedrel
 \begin{align}
        \nonumber
  \text{Pr}({\mc{E}_d}) &\le~
        C_2 \sum_{\uy} \sum_{\matX_{S_{1}}} Q(\matX_{S_1}) P(\uy | \matX_{S_1})  
        \sum_{\matX_{S_{\omega d^c}}} Q(\matX_{S_{\omega d^c}})
        \left [ \mc{P}_0(\uy, \matX_d, \matX_{S_{\omega d^c}}) \right ]^{\rho} \\
	\nonumber
	&\hintedrel[ina]{\le}~ 
        C_2 \sum_{\uy} \sum_{\matX_{S_{1}}} \sum_{\matX_{S_{\omega d^c}}} 
	Q(\matX_{S_1}) P(\uy, \matX_{S_{\omega d^c}} | \matX_{S_1})  
        \left [ \mc{P}_0(\uy, \matX_d, \matX_{S_{\omega d^c}}) \right ]^{\rho} \\
	\nonumber
	&\hintedrel[inb]{\le}~ 
	C_2 \sum_{\uy} \sum_{\matX_d} \sum_{\matX_{1 d^c}} \sum_{\matX_{S_{\omega d^c}}} 
	Q(\matX_d) P(\uy, \matX_{S_{\omega d^c}}, \matX_{S_{1 d^c}} | \matX_d)  
        \left [ \mc{P}_0(\uy, \matX_d, \matX_{S_{\omega d^c}}) \right ]^{\rho} \\
	\nonumber
	&\hintedrel[inc]{\le}~ 
        C_2 \sum_{\uy} \sum_{\matX_{S_{\omega d^c}}} \sum_{\matX_d} 
	  Q(\matX_d)~ P(\uy, \matX_{S_{\omega d^c}} | \matX_d) 
	   \left \{ \sum_{\matX_{S_{\alpha }}} Q(\matX_{S_{\alpha }}) 
	       \left[ \frac{P(\uy, \matX_{S_{\omega d^c}} |\matX_{\alpha})}
		       {P(\uy, \matX_{S_{\omega d^c}}|\matX_{d} )} \right ]^s \right \} ^{\rho L_0} \\
        \label{eq:final_pe}
	&\hintedrel[ind]{\le}~ 
        C_2 \sum_{\uy} \sum_{\matX_{S_{\omega d^c}}}
	   \left \{ \sum_{\matX_{S_{\alpha }}} Q(\matX_{S_{\alpha }}) 
	       P(\uy, \matX_{S_{\omega d^c}} |\matX_{\alpha})^{\frac{1}{1 + \rho L_0}}
	     \right \}^{ 1 + \rho L_0}.
\end{align}
In the above equation, (\hintref{ina}) follows by using the fact that given the active 
set $S_1$, $\uy$ is independent of the other input variables. 
Thus, $P(\uy, \matX_{S_{\omega d^c}}  | \matX_{S_1}) = P(\uy | \matX_{S_1}) 
Q(\matX_{S_{\omega d^c}})$. (\hintref{inb}) follows since
$S_1 = \{d \cup S_{1 d^c} \}$. (\hintref{inc}) follows by substituting the 
expression for $\mc{P}_0$ and by averaging out $\matX_{S_{1 d^c}}$, since the 
expression for $\mc{P}_0$ does not depend upon $\matX_{S_{1d^c}}$.
In (\hintref{inc}), the term $[P(\uy, \matX_{S_{\omega d^c}}|\matX_{d} )]^{s\rho L_0}$ can be factored
out from expression inside the curly braces.
Finally, (\hintref{ind}) is obtained by choosing $s = \frac{1}{1 + \rho L_0}$ and simplifying further.
Next, the above upper bound for $\text{Pr}(\mc{E}_d)$ depends only on $\matX_d$ and not
on any particular value of $d$. Thus, from (\ref{eq:overallpe_eq}) and (\ref{eq:final_pe}) we get:
\begin{align}
        \nonumber
  P_e &\le K C_2 \sum_{\uy} \sum_{\matX_{S_{\omega d^c}}}
	   \left \{ \sum_{\matX_{S_{\alpha }}} Q(\matX_{S_{\alpha }}) 
	       P(\uy, \matX_{S_{\omega d^c}} |\matX_{\alpha})^{\frac{1}{1 + \rho L_0}}
	     \right \}^{ 1 + \rho L_0} \\
  & \le \exp \left [ - M \left (
    E_0(\rho,1, L_0) - \frac{ \log (K C_2) }{M} \right ) \right ].
\end{align}
The inequality above is obtained by further simplifying
using independence across different observations and writing the bound in the exponential 
form, as in the $K=1$ case. 
The upper bound on $P_e$ given in~(\ref{eq:nonasym_pe}) now follows by substituting the value
of $C_2$ in the above. Hence the proof.~$\blacksquare$

\subsection{Proof of Theorem \ref{thm_necc_cond}: Necessary Number of Observations}
\label{sec_proof_necc}
For the purpose of this proof, recall that $P_e$ was defined 
in (\ref{eq:lb_pe_def}).
We need to prove that $\lim_{N \rightarrow \infty} P_e = 0$ implies the bound 
on the number of observations as given by (\ref{eq:necc_cond_M}). Towards that
end, we first find, by lower bounding $P_e$, the conditions
on $M$ that will lead to the error probability being bounded away from zero.
We consider a genie-aided lower bound, where we assume that the active set is
partially known. Let us define a partition for $S_{\omega}$ as 
$S_\omega = S^{(j)} \cup S^{(K-j)}$, where $|S^{(j)}| = j$ and
$|S^{(K-j)}| = K - j$ and $S^{(j)} \cap S^{(K-j)} = \{ \emptyset\}$. We assume that
$S^{(K-j)}$ (and hence, for a given code, the matrix $\matX_{S^{(K-j)}}$) is known to us. For the result to follow, by symmetry of the codebook construction, it does not matter which of the $K-j$ indices in the defective set are assumed to be known. 
Now consider $H(\omega, E | \uy, \matX_{S^{(K-j)}})$:
\restarthintedrel
\begin{align} \label{eq:fano_chrule_1}
  H(\omega, E | \uy, \matX_{S^{(K-j)}}) &= H( E | \uy, \matX_{S^{(K-j)}}) + 
  H( \omega | E, \uy, \matX_{S^{(K-j)}}) \\
  &\hintedrel[feq1]{\le} H_b(P_e) + (1 - P_e) H( \omega | E=0, \uy, \matX_{S^{(K-j)}}) + 
  P_e H( \omega | E = 1, \uy, \matX_{S^{(K-j)}}) \\
  &\hintedrel[feq2]{\le} H_b(P_e) + (1 - P_e) \log {N- K + j - L \choose j}  + 
  P_e H(\omega | \matX_{S^{(K-j)}}) \\
  \label{eq:fano_chrule_11}
  &\hintedrel[feq3]{\le} H_b(P_e) + (1 - P_e) \log {N- K + j - L \choose j}  + 
  P_e \log {N - K + j \choose j}.
\end{align}

In the above, (\hintref{feq1}) follows since $E$ is a binary RV and 
$H( E | \uy, \matX_{S^{(K-j)}}) \le H(E) = H_b(P_e) \le 1$.
Since the entropy of any RV is bounded by the logarithm of the alphabet size,
(\hintref{feq2}) follows by considering the cardinality of the remaining number of outcomes
conditioned on the outcome of $E$. For example, when $E=0$, i.e., 
when there is no error, the
number of ways of choosing the set $S^{(j)}$ is  
${N- K + j - L \choose j}$.
(\hintref{feq3}) follows by using a trivial bound on $H(\omega | \matX_{S^{(K-j)}})$.
Also,  
\begin{align} \label{eq:fano_chrule_2}
  H(\omega, E | \uy, \matX_{S^{(K-j)}}) &= H(\omega | \uy, \matX_{S^{(K-j)}}) + 
  H( E | \omega , \uy, \matX_{S^{(K-j)}})
  = H(\omega | \uy, \matX_{S^{(K-j)}}).
\end{align}
For a given $\matX$, the mapping from $\omega$ to $\matX_{S_\omega}$ is one-one
and onto. Thus, $H(\omega | \matX_{S^{(K-j)}}) = 
H(\matX_{S_{\omega}} | \matX_{S^{(K-j)}})$ and similarly 
$H(\omega | \uy, \matX_{S^{(K-j)}}) = 
H(\matX_{S_{\omega}} | \uy, \matX_{S^{(K-j)}})$.
Using the above and the fact that 
$H(\omega | \matX_{S^{(K-j)}}) = \log {N - K + j \choose j}$
in (\ref{eq:fano_chrule_11}) and (\ref{eq:fano_chrule_2}), we get
\begin{align} \label{eq:pe_lower_bnd}
  \log {N - K + j \choose j} &= H(\matX_{S_{\omega}} | \uy, \matX_{S^{(K-j)}}) +
  I(\matX_{S_{\omega}} ; \uy| \matX_{S^{(K-j)}}) \\
  &\le H_b(P_e) + \log {N- K + j - L \choose j} + P_e \Gamma_l(L, N, K, j)
  + I(\matX_{S_{\omega}} ; \uy| \matX_{S^{(K-j)}}).
\end{align}
Note that $ I(\matX_{S_{\omega}} ; \uy| \matX_{S^{(K-j)}}) = 
 I(\matX_{S^{(j)}} ; \uy| \matX_{S^{(K-j)}})$ and 
using basic properties of entropy, mutual information and the i.i.d.\ assumption across
observations, it can be shown that \cite{Atia_BooleanCS}:
\begin{align} \label{eq:I_vectoscalar}
  I(\matX_{S^{(j)}} ; \uy| \matX_{S^{(K-j)}}) \le M 
  I(X_{S^{(j)}} ; Y| X_{S^{(K-j)}}) =  MI^{(j)}.
\end{align}

Thus, we get a genie aided lower bound on the probability of error as
\begin{align} \label{eq:pe_lower_bnd_final}
  P_e &\ge 1 - \frac{ H_b(P_e) + M I^{(j)}}
  {\Gamma_l(L, N, K, j)} ~ ~ ~ \forall~ j = 1, 2, \ldots, K.
\end{align}
This further implies
\begin{align} \label{eq:pe_lower_bnd_M}
  M &\ge \frac{(1 - P_e) \Gamma_l(L, N, K, j) - H_b(P_e) }{I^{(j)}}
  ~ ~ ~ \forall~ j = 1, 2, \ldots, K.
\end{align}
The above equation holds for all $j = 1, 2, \ldots, K$ and thus, the lower bound on the 
number of observations follow easily by noting that $H_b(P_e) \rightarrow 0$ as 
$P_e \rightarrow 0$. Hence the proof.
\subsection{Proof of Theorem \ref{lemma_suff_NN_KON}} \label{proof_suff_NN_KON}
  In (\ref{eq:nonasym_pe}), consider the term  
  $\mc{T}(\rho) \triangleq \left ( M E_0(\rho,1, L_0) - { \rho \log {N-K \choose L_0} } - 
    { \log \left[K {N - 1 - L_0 \choose K -1} \right]} \right )$. 
Using the results of Lemma~\ref{E0LB_Thm}, 
    for any $\epsilon_0 > 0$, at $\rho = \rho_0$ where $\rho_0 = \frac{K-1}{L_0}$,\footnote{Note that, 
    for $L \le N-3K+1$, $\rho_0 = \frac{K-1}{L_0} < 1$.}
    if $M$ is chosen as
    \begin{align} \label{eq:Suff_cond_0}
          M >  (1+\epsilon_0) \left [
	    \frac{ \rho_0 \log {N-K \choose L_0} }{E_0^{(lb)}} +
	    \frac{ \log \left[ {L + 2(K-1) \choose K -1} \right]}{E_0^{(lb)}} + 
	    \frac{ \log K }{E_0^{(lb)}} \right],
    \end{align}
    then, 
    $\mc{T}(\rho) > \epsilon_0 (K-1) \lb \log \frac{N-K}{L_0}  + \log ( 2+ \frac{L}{K-1})\rb > 
    \epsilon_0 (K-1) \log \frac{N-K}{L_0} > 0$.

  Using Stirling's formula, for any $n \in \mathbb{Z}_{+}$: $\sqrt{2 \pi} n^{n + 1/2} e^{-n} \le n! \le 
  e n^{n + 1/2} e^{-n}$, we note
  \begin{align}
    \log {N-K \choose L_0} &\le L_0 \log (\frac{N-K}{L_0}) + (L+K-1) \log (\frac{N-K}{L+K-1})
    + \frac{1}{2} \log \frac{N-K}{L_0 (L+K-1)}\\
    &\le L_0 \log (\frac{N-K}{L_0}) + (L+K-1) \log (\frac{N-K}{L+K-1}).
  \end{align}
  The second inequality follows since under the assumptions on the range of $L$,
  $\frac{N-K}{L_0 (L+K-1)} < 1$. Thus, with $\theta_0 \triangleq \frac{L+K-1}{N-K}$, we get
  $\frac{\log {N-K \choose L_0}}{L_0} \le \frac{H_b(\theta_0)}{1 - \theta_0}$.
  Finally, the bound in (\ref{eq:suff_NN_KON}) results by using the inequality ${m \choose n} \le \left (\frac{e m}{n} \right )^n$ 
  to upper bound the second term in (\ref{eq:Suff_cond_0}).

\section{Conclusions} \label{sec_conclusions}
In this paper, we considered the problem of identifying $L$ non-defective items out of 
a large population of $N$ items containing $K$ defective items in a general 
sparse signal modeling setup. We contrasted two approaches: identifying the 
defective items using the observations followed by picking $L$ items from the 
complement set, and  directly identifying non-defective items from the observations. 
We derived  upper and lower bounds on the number of observations required for identifying 
the $L$ non-defective items. We showed that a gain in the number of 
observations is obtainable by directly identifying the non-defective items. 
We also applied the results in a nonadaptive group 
testing setup. We characterized the number of tests that are sufficient 
to identify a subset of non-defective items in a large population, under both dilution 
and additive noise models. 
Our results were information theoretic in nature, without considering the 
practicability of the decoding algorithms. Our companion study looks at 
finding computationally tractable algorithms for directly identifying a subset 
of inactive variables, in the context of non-adaptive group testing. Future work could focus on tightening the upper bounds on the sufficient number of tests, thereby obtaining order-optimal results.

\appendix
\subsection{Proof of Lemma \ref{E0LB_Thm} } \label{sec_proof_E0}
   From (\ref{eq:E0_def}), it follows that:
   \begin{align} \label{eq:E0_def_1} 
    E_0(\rho,j, n) &= - \log \sum_{Y \in \mathcal{Y}}  \sum_{X_{S^{(K-j)}} \in \mathcal{X}^{K-j}} 
	  Q\left( X_{S^{(K-j)}} \right) \left \{ \sum_{X_{S^{(j)}} \in \mathcal{X}^j} Q(X_{S^{(j)}}) 
          \left( P(Y | X_{S^{(K-j)}}, X_{S^{(j)}})\right)^{\frac{1}{1 + \rho n}}
	       \right \}^{ 1 + \rho n} 
   \end{align}
   In the above, we substitute $j=1$, $n = L_0$ and $\rho = \rho_0$.
   Let $w(X_{S^{(K-1)}})$ denote the number of $1$'s in 
    $X_{S^{(K-1)}} \in \{0,1\}^{(K-1)}$. Let $n_0 \triangleq 1 + \rho_0 L_0$ and further, note
    that $n_0 = K$.
    For the non-adaptive group testing signal model, using (\ref{eq:gtmodel}), we have computed
    the posterior probability $P(Y | X_{S^{(K-1)}},X_{S^{(1)}})$ for different scenarios and summarized it in
    Table \ref{tab:Pytab}. 
  \begin{table*}[t]
    \centering
    \caption{$P(Y | X_{S^{(K-1)}},X_{S^{(1)}})$ values under different scenarios for the non-adaptive group testing signal model.}
    \begin{tabular}{|c|l|l|l|l|} \hline 
      & \multicolumn{2}{|c|}{$w(X_{S^{(K-1)}})=0$} & \multicolumn{2}{|c|}{$w(X_{S^{(K-1)}})=l, 1\le l \le K-1$} \\ \cline{2-5}
        & $X_{S^{(1)}} = 0$ & $X_{S^{(1)}} = 1$ & $X_{S^{(1)}} = 0$ & $X_{S^{(1)}} = 1$ \\ \hline
	$P(Y=0 | X_{S^{(K-1)}},X_{S^{(1)}})$ & $(1-q)$ & $(1-q)u$ & $(1-q)u^l$ & $(1-q)u^{l+1}$ \\ \hline
	$P(Y=1 | X_{S^{(K-1)}},X_{S^{(1)}})$ & $q$ & $(1-(1-q)u)$ & $1 - (1-q)u^l$  & $1 - (1-q)u^{l+1}$  \\ \hline
				\end{tabular}
				\label{tab:Pytab}
			      \end{table*}
\begin{enumerate}[(a)]
  \item Noiseless case: Using $q=0$, $u=0$ in Table~\ref{tab:Pytab} and substituting in (\ref{eq:E0_def_1}) we get:
\begin{align} \label{eq:E0_NN}
  E_0(\rho, 1, L_0) = -\log \left[ 1 - (1 - p)^{(K-1)}\left(1 - (1 - p)^{n_0} - p^{n_0} \right )\right ].
\end{align}
Using, (i) the inequality $-\log(1-x) \ge x$ for $x < 1$, (ii) For $p=\frac{1}{K}$, $(1-p)^{(K-1)} > e^{-1}$ and 
$(1-p)^{K} < e^{-1}$, 
(\ref{eq:e0_lb_nonoise}) results.

  \item Additive noise case: Using $u=0$ in Table~\ref{tab:Pytab} and substituting in (\ref{eq:E0_def_1}) we get:
\begin{align} \label{eq:E0_AN}
  E_0(\rho, 1, L_0) = -\log \left[ 1 - (1 - p)^{(K-1)}\left( 1 - (1-q)(1-p)^{n_0} - 
    \left \{ (1-p) q^{\frac{1}{n_0}} + p \right \}^{n_0} \right )\right ].
\end{align}
To lower bound $E_0$, we first upper bound the term $t_0 \triangleq \left \{ (1-p) q^{\frac{1}{n_0}} + p \right \}^{n_0}$. 
For any $n \ge 1$, $x^n$ is a convex function and hence, using Jensen's inequality we get 
$t_0 \le (1-p) q + p$. Substituting and further simplifying we get:
\begin{align} \label{eq:E0_AN_1}
  E_0(\rho, 1, L_0) &\ge -\log \left[ 1 - (1 - p)^{K}(1-q)\left( 1 - (1-p)^{(n_0-1)} \right )\right ].
\end{align}
The bound in (\ref{eq:e0_lb_q}) now results by using the inequality $-\log(1-x) \ge x$ for $x < 1$ 
and noting the following:
For $p=\frac{1}{K}$, using the inequality, $1-x \le e^{-x} \le 1 - \frac{x}{2}$ for $0\le x \le 1$, 
we get $(1-p)^K \ge e^{-2}$ and $1 - (1-p)^{(n_0-1)} \ge \frac{n_0 - 1}{2 K} \ge \frac{1}{4}$ for $K \ge 2$.

  \item Dilution noise case: 
    Let $G_l \triangleq {K-1 \choose l} p^l (1-p)^{(K-1-l)}$. Using $q=0$ in Table~\ref{tab:Pytab} and 
    substituting in (\ref{eq:E0_def_1}) we get:
\begin{align} \label{eq:E0_DN}
  E_0(\rho, 1, L_0) &= -\log \left[ T_0 + T_1 \right ], ~ ~ \mbox{where},\\
  \nonumber
  T_0 \triangleq \sum_{l=0}^{K-1} G_l u^l \left( (1-p) + pu^{\frac{1}{n_0}}\right)^{n_0} ~ &\mbox{and} ~
  T_1 \triangleq \sum_{l=0}^{K-1} G_l 
  {\left( (1-p)(1 - u^l)^{\frac{1}{n_0}} + p(1 - u^{l+1})^{\frac{1}{n_0}} \right)^{n_0}}.
\end{align}
Using Jensen's inequality to upper bound $T_1$, we get
\begin{align} \label{eq:E0_DN_1}
  T_1 &\le \sum_{l=0}^{K-1} G_l \left( (1-p)(1 - u^l) + p(1 - u^{l+1}) \right)\\
  &= 1 - \zeta_0 \sum_{l=0}^{K-1} G_l u^l,	
\end{align}
where $\zeta_0 \triangleq (1 - (1-u) p)$ and we have made use of the fact that $\sum_{l=0}^{K-1} G_l = 1$.
Further, since $\sum_{l=0}^{K-1} G_l u^l = \zeta_0^{(K-1)}$, we get
\begin{align} \label{eq:E0_DN_2}
  E_0(\rho, 1, L_0) &\ge -\log \left[ 1 - (\zeta_0 - \psi_0) \zeta_0^{(K-1)} \right ],
\end{align}
where $\psi_0 \triangleq \left( 1- (1 - u^{\frac{1}{n_0}})p \right)^{n_0}$.
Using the inequality $-\log(1-x) \ge x$ for $x < 1$, we get:
\begin{align} \label{eq:E0_DN_3}
  E_0(\rho, 1, L_0) &\ge (\zeta_0 - \psi_0) \zeta_0^{(K-1)}
  \ge \left [ 1 - \left ( 1 - (1 - u^{\frac{1}{n_0}})p \right )^{n_0 - 1} \right ] \zeta_0^K,
\end{align}
where the second inequality follows since $( 1 - (1 - u^{\frac{1}{n_0}})p) \ge \zeta_0$.
The bound in (\ref{eq:e0_lb_u}) now results by noting the following:
For $p=\frac{1}{K}$, using the inequality, $1-x \le e^{-x} \le 1 - \frac{x}{2}$ for $0\le x \le 1$, 
we get $\zeta_0^K \ge e^{-2(1-u)} \ge e^{-2}$ and 
$\left [ 1 - \left ( 1 - (1 - u^{\frac{1}{n_0}})p \right )^{n_0 - 1} \right ]  
\ge (1-u^{\frac{1}{n_0}}) \frac{n_0 - 1}{2 K} \ge \frac{1}{4}(1-u^{\frac{1}{n_0}})$ for $K \ge 2$.

{\bf Remark:} For $\rho_0 = \frac{a}{L_0}$ for any $a$, $n_0 = 1+a$. Thus, 
$E_0(\rho_0, 1, L_0) \ge \frac{(1-u^{\frac{1}{1+a}}) a}{2 K}$. In particular, with $a = 1$, 
$E_0(\rho_0, 1, L_0) \ge \frac{(1-u^{\frac{1}{2}}) }{2 K}$.

\end{enumerate}
\subsection{Order-Tight Results for Necessary and Sufficient Number of Tests with Group Testing}
\label{order_res_proof}
In this section, we present a brief sketch of the derivation of the order results for
the necessary number of tests presented in Table \ref{tab:tab_necc_order}. We first
note that $I^{(j)} = H(Y | X_{S^{(K-j)}}) - H(Y | X_{S^{(K-j)}}, X_{S^{(j)}})$
\cite{Atia_BooleanCS}, where
$H(\cdot| \cdot )$ represents the entropy function\cite{Cover_bk}. 
From (\ref{eq:gtmodel}), we have
\begin{align}
H(Y | X_{S^{(K-j)}}) &= \sum_{l=0}^{K-j}\left [ {K-j \choose l} p^l (1 - p)^{K-j-l}
H_b\left ( (1-q) u^l (1 - p(1-u))^j \right) \right ] \\
H(Y | X_{S^{(K-j)}}, X_{S^{(j)}}) &= \sum_{i=0}^{K}\left [ {K \choose i} p^i (1 - p)^{K-i}
H_b\left ( (1-q) u^i \right) \right ].
\end{align}
We use the results from \cite{johnson_gtbp} for bounding the mutual information term. 
We collect the required results from \cite{johnson_gtbp} in the following lemma.
\begin{lemma}
  Bounds on $I^{(j)}$\cite{johnson_gtbp}: Let $p = \frac{\delta}{K}$. $I^{(j)}$ can be
  expressed as $I_1^{(j)} + I_2^{(j)}$, where 
  \begin{align} \label{eq:I1_1}
    I_1^{(j)} = \delta e^{-\delta(1-u)} (1 - q) \left(u\log u + 1-u\right) \frac{j}{K} + 
    O\left(\frac{1}{K^2}\right). 
  \end{align}
  For the case with $u=0$ and $q > 0$ we have:
  \begin{align} \label{eq:I2_1}
    I_2^{(j)} = \delta e^{-\delta} \left( \log(\frac{1}{q}) - (1 - q) \right) \frac{j}{K} + 
    O\left(\frac{1}{K^2}\right),
  \end{align}
  and for $q = 0$, $u \ge 0$ we have:
  \begin{align} \label{eq:I2_2}
    \nonumber
    \delta e^{-\delta} &\left( (1-u)\left[ \log\frac{K}{j \delta(1-u)} \right] - u \right) \frac{j}{K} + 
    O\left(\frac{1}{K^2}\right) ~ \le ~ I_2^{(j)} \\
    &\le~ \delta e^{-\delta(1 - u^2)} \left( (1-u)\left[ \log \frac{K}{j \delta(1-u)} \right]
    - u + u^2 \right) \frac{j}{K} + O\left(\frac{1}{K^2}\right).
  \end{align}
\end{lemma}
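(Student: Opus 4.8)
The plan is to obtain the two-part expansion directly from the exact conditional-entropy expressions for $H(Y \mid X_{S^{(K-j)}})$ and $H(Y \mid X_{S^{(K-j)}}, X_{S^{(j)}})$ displayed immediately above the lemma, substituting $p = \delta/K$ and expanding as $K \to \infty$. Since $I^{(j)}$ is the difference of these two entropies, the claimed decomposition $I^{(j)} = I_1^{(j)} + I_2^{(j)}$ will arise by grouping terms: $I_1^{(j)}$ collects the contributions carrying the dilution factor $(1-q)(u\log u + 1-u)$, while $I_2^{(j)}$ collects the remaining ``principal'' contribution, which takes the form $\log(1/q)$ in the additive case $u=0$ and $\log\frac{K}{j\delta(1-u)}$ in the pure-dilution case $q=0$. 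The target accuracy is the leading $\tfrac{j}{K}$ scaling with an $O(1/K^2)$ remainder.

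First I would treat the binomial weights. With $p = \delta/K$, each factor $\binom{K}{i}p^i(1-p)^{K-i}$ converges to the Poisson mass $e^{-\delta}\delta^i/i!$ with a correction of order $1/K$, and likewise $\binom{K-j}{l}p^l(1-p)^{K-j-l}$ is Poisson with mean $(K-j)p = \delta(1 - j/K)$. Because $p$ is small, both entropy sums are dominated by the smallest indices, so I would retain the $i=0,1$ and $l=0,1$ terms explicitly and bound the remaining tail by $O(1/K^2)$. The key structural point is that the two entropies differ only through the extra factor $(1 - p(1-u))^j$ appearing inside the argument of $H_b$ in the partially-conditioned entropy, together with the shift from $K$ to $K-j$ defectives in the weights; using $(1-p(1-u))^j = 1 - j\delta(1-u)/K + O((j/K)^2)$, both effects are of order $j/K$, which is exactly what generates the $\tfrac{j}{K}$ prefactor in \eqref{eq:I1_1}--\eqref{eq:I2_2}.

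Next I would Taylor-expand the binary entropy. Writing the argument as $a = (1-q)u^l$ for the fully-conditioned entropy and $a(1 - p(1-u))^j$ for the partially-conditioned one, the first-order expansion $H_b\!\left(a(1+\epsilon)\right) = H_b(a) + a\epsilon\log\frac{1-a}{a} + O(\epsilon^2)$ with $\epsilon = O(j/K)$ captures the leading difference. Collecting the dominant $l=0$ contribution yields the $(1-q)(u\log u + 1 - u)$ dilution factor of $I_1^{(j)}$, while tracking the $q$-dependence of the $H_b$ argument through the $i=0$ weight produces the $\log(1/q) - (1-q)$ term of $I_2^{(j)}$ in the additive case $u=0$, which combines cleanly with $I_1^{(j)}$ to give the expected $\delta e^{-\delta}\log(1/q)\tfrac{j}{K}$.

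The hard part will be the pure-dilution case $q=0$ with $u$ bounded away from $1$, where the relevant $H_b$ argument is pushed to order $j\delta(1-u)/K$ away from a boundary value, placing $H_b$ in its logarithmically singular regime. There the naive first-order expansion is inadequate, and one must instead bracket the entropy near the boundary using inequalities of the type $-x\log x \le H_b(x) \le -x\log x + x$; this is precisely why \eqref{eq:I2_2} is stated as a two-sided bound rather than an exact asymptotic, and why the $\log\frac{K}{j\delta(1-u)}$ factor emerges as $\log(1/x)$ with $x$ of order $j\delta(1-u)/K$. Matching the differing prefactors $e^{-\delta}$ and $e^{-\delta(1-u^2)}$ in the lower and upper bounds, while controlling the $O(1/K^2)$ remainders \emph{uniformly} over $1 \le j \le K$ (so that the $O((j/K)^2)$ terms from the exponent expansion remain negligible against the $\tfrac{j}{K}$ main term), is the principal technical obstacle; the rest is routine Poisson and Taylor bookkeeping.
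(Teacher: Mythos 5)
The first thing to note is that the paper itself does not prove this lemma at all: it is imported verbatim from \cite{johnson_gtbp}, with the entropy expressions for $H(Y\mid X_{S^{(K-j)}})$ and $H(Y\mid X_{S^{(K-j)}},X_{S^{(j)}})$ stated only as the starting point. So your attempt is a from-scratch derivation, and its overall architecture --- Poissonize the binomial weights, pair the two entropy sums, Taylor-expand $H_b$ in the interior, and treat the $l=0$ term separately in the pure-dilution case where the argument of $H_b$ sits at distance $\approx j\delta(1-u)/K$ from the boundary, forcing the two-sided bracketing $-x\log x \le H_b(x) \le -x\log x + x$ --- is the right architecture, and your identification of the singular regime as the source of the $\log\frac{K}{j\delta(1-u)}$ factor and of the two-sidedness of \eqref{eq:I2_2} is correct.

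However, there is a concrete step that fails: the claim that ``both entropy sums are dominated by the smallest indices, so I would retain the $i=0,1$ and $l=0,1$ terms explicitly and bound the remaining tail by $O(1/K^2)$.'' With $p=\delta/K$ the weights $\binom{K-j}{l}p^l(1-p)^{K-j-l}$ converge to the Poisson masses $e^{-\delta}\delta^l/l!$, which are \emph{independent of $K$}; the $l\ge 2$ terms of each sum are $\Theta(1)$, and --- what actually matters --- their contribution to the \emph{difference} $I^{(j)}$ is $\Theta(j/K)$ for every $l$ when $0<u<1$, i.e., the same order as the main term. The constants in the lemma witness this directly: $e^{-\delta(1-u)}=e^{-\delta}\sum_{l\ge 0}(\delta u)^l/l!$, the $u\log u$ factor in \eqref{eq:I1_1} arises from sums of the form $\sum_l l\,u^l$ against Poisson weights (since $\log u^l = l\log u$), and the upper-bound prefactor $e^{-\delta(1-u^2)}$ in \eqref{eq:I2_2} arises from summing $u^{2l}$ against Poisson weights. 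None of these can emerge from an $l\le 1$ truncation: your scheme would produce constants of the form $e^{-\delta}\bigl(1+\delta u\,(\cdots)\bigr)$, wrong already at order $\delta^2 u^2\, j/K$, which is not absorbable into the $O(1/K^2)$ remainder. The truncation is harmless only when $u=0$ (additive noise only), since then $u^l=0$ for all $l\ge 1$. The repair is to keep the full sum over $l$: pair the $l$-th terms of the two entropies, account for the $\mathrm{Bin}(K,p)$-versus-$\mathrm{Bin}(K-j,p)$ weight shift (which likewise contributes at order $j/K$ at \emph{every} $l$, not just $l\le 1$), expand each paired difference to first order as you propose, and then evaluate the resulting Poisson series in closed form; your remaining steps, including the uniformity-in-$j$ caveat, then go through.
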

Thus, with $\delta = 1$ and large $K$, neglecting $O(1/K^2)$ terms, we get: (a) For $u=0$, $q > 0$ case,
$I^{(j)} \approx \frac{j}{eK}\log(\frac{1}{q})$. 
(b) For $q=0$, $ 0 \le u \le 0.5$ case, simplifying further, we get
\begin{align}
  \frac{j}{eK}(1-u) \log\frac{K}{j} \lessapprox I^{(j)} \lessapprox \frac{j}{e^{1/2}K}(1-u) 
  \left(\log\frac{K}{j} + 1 \right).
\end{align}
In the above, we have used the notation ``$\approx$'' and ``$\lessapprox$'' to highlight the fact that
$O(\frac{1}{K^2})$ terms have been neglected in the above expressions for $I^{(j)}$.
The order results for lower bounds now follow by first noting that 
$\max_{1 \le j \le K} \frac{\Gamma_l(L, N, K, j)}{I^{(j)}} \ge \frac{\Gamma_l(L, N, K, 1)}{I^{(1)}}$, 
and, for the scaling regimes under consideration the combinatorial term, $\Gamma_l(L,N,K,1) $ can be asymptotically 
bounded as $\lim_{N \rightarrow \infty} \Gamma_l(L, N, K, 1) \ge \log\frac{1}{1-\alpha_0}$.

\bibliographystyle{IEEEbib}
\bibliography{IEEEabrv,bibJournalList,refs_fnds}

\begin{thebibliography}{10}

\bibitem{Sharma_ITA_2013}
A.~Sharma and C.~R. Murthy,
\newblock ``On finding a set of healthy individuals from a large population,''
\newblock in {\em Information Theory and Applications Workshop, San Diego, CA,
  USA}, 2013, pp. 1--5.

\bibitem{Candes_DecbyLP}
E.~J. Cand\'{e}s and T.~Tao,
\newblock ``Decoding by linear programming,''
\newblock {\em {IEEE} Trans. Inf. Theory}, vol. 51, no. 12, pp. 4203--4215,
  Dec. 2005.

\bibitem{Dorfman_GT}
R.~Dorfman,
\newblock ``{The Detection of Defective Members of Large Populations},''
\newblock {\em The Annals of Mathematical Statistics}, vol. 14, no. 4, Dec.
  1943.

\bibitem{du2006}
D.~Du and F.~Hwang,
\newblock {\em Pooling designs and non-adaptive group testing: Important tools
  for DNA sequencing},
\newblock World Scientific, 2006.

\bibitem{Bruckstein_sparse}
A.~M. Bruckstein, D.~L. Donoho, and M.~Elad,
\newblock ``From sparse solutions of systems of equations to sparse modeling of
  signals and images,''
\newblock {\em SIAM Rev.}, vol. 51, no. 1, pp. 34--81, Feb. 2009.

\bibitem{Trop_JustRelax}
J.~A. Tropp,
\newblock ``Just relax: convex programming methods for identifying sparse
  signals in noise,''
\newblock {\em {IEEE} Trans. Inf. Theory}, vol. 52, no. 3, pp. 1030--1051, Mar.
  2006.

\bibitem{HaykinCognitiveRadio}
S.~Haykin,
\newblock ``Cognitive radio: brain-empowered wireless communications,''
\newblock {\em {IEEE} J. Sel. Areas Commun.}, vol. 23, no. 2, pp. 201--220,
  Feb. 2005.

\bibitem{Cabric05cognitive}
D.~Cabric, S.~M. Mishra, D.~Willkomm, R.~Brodersen, and A.~Wolisz,
\newblock ``A cognitive radio approach for usage of virtual unlicensed
  spectrum,''
\newblock in {\em Proc. of 14th IST Mobile Wireless Communications Summit},
  2005.

\bibitem{FCCreport}
FCC,
\newblock ``Et docket no. 02-155,''
\newblock {\em Spectrum policy task force report}, Nov. 2002.

\bibitem{Sharma_Globecom_2012}
A.~Sharma and C.~R. Murthy,
\newblock ``A group testing based spectrum hole search using a simple
  sub-nyquist sampling scheme,''
\newblock in {\em Proc.\ Globecom}. 2012, pp. 1--6, IEEE.

\bibitem{kautz_nonrand_supimp}
W.~Kautz and R.~Singleton,
\newblock ``Nonrandom binary superimposed codes,''
\newblock {\em {IEEE} Trans. Inf. Theory}, vol. 10, no. 4, pp. 363--377, 1964.

\bibitem{Erdos_cover}
P.~Erdos, P.~Frankl, and Z.~Furedi,
\newblock ``Families of finite sets in which no set is covered by the union of
  $r$ others,''
\newblock {\em Israel Journal of Mathematics}, vol. 51, no. 1-2, pp. 79--89,
  1985.

\bibitem{Ruszinko94}
M.~Ruszink{\'o},
\newblock ``On the upper bound of the size of the $r$-cover-free families,''
\newblock {\em J. Comb. Theory, Ser. A}, vol. 66, no. 2, pp. 302--310, 1994.

\bibitem{Dachkov_bound}
A.~G. Dyachkov and V.~V. Rykov,
\newblock ``Bounds on the length of disjunctive codes,''
\newblock {\em Problems of Information Transmission}, vol. 18, no. 3, pp.
  7--13, 1982.

\bibitem{Sebo1985}
A.~Sebo,
\newblock ``On two random search problems,''
\newblock {\em Journal of Statistical Planning and Inference}, vol. 11, no. 1,
  pp. 23--31, Jan. 1985.

\bibitem{Gilbert_GT}
A.~C. Gilbert, M.~A. Iwen, and M.~J. Strauss,
\newblock ``Group testing and sparse signal recovery,''
\newblock in {\em Proc.\ Asilomar Conf.\ on Signals, Syst., and Comput.}, Oct.
  2008, pp. 1059--1063.

\bibitem{vetterli_ngt}
M.~Cheraghchi, A.~Hormati, A.~Karbasi, and M.~Vetterli,
\newblock ``Group testing with probabilistic tests: Theory, design and
  application,''
\newblock {\em {IEEE} Trans. Inf. Theory}, vol. 57, no. 10, pp. 7057--7067,
  Oct. 2011.

\bibitem{malyutov_1}
M.~B. Malyutov,
\newblock ``The separating property of random matrices,''
\newblock {\em Mat. Zametki}, vol. 23, no. 1, pp. 155--167, 1978.

\bibitem{malyutov_2}
M.~B. Malyutov,
\newblock ``On the maximal rate of screening designs,''
\newblock {\em Theory Probab. and Appl.}, vol. 24, pp. 655--667, 1979.

\bibitem{malyutov_3}
M.~B. Malyutov and P.~S. Mateev,
\newblock ``Planning of screening experiments for a nonsymmetric response
  function,''
\newblock {\em Mat. Zametki}, vol. 27, no. 1, pp. 109--127, 1980.

\bibitem{Atia_BooleanCS}
G.~Atia and V.~Saligrama,
\newblock ``Boolean compressed sensing and noisy group testing,''
\newblock {\em {IEEE} Trans. Inf. Theory}, vol. 58, no. 3, pp. 1880--1901,
  2012.

\bibitem{jaggi_gtalgo}
C.~L. Chan, S.~Jaggi, V.~Saligrama, and S.~Agnihotri,
\newblock ``Non-adaptive group testing: Explicit bounds and novel algorithms,''
\newblock {\em eprint arXiv:1202.0206}, 2012.

\bibitem{aldrige_cap_adaptive_gt}
L.~Baldassini, O.~Johnson, and M.~Aldridge,
\newblock ``The capacity of adaptive group testing,''
\newblock in {\em Information Theory Proceedings (ISIT), 2013 IEEE
  International Symposium on}, July 2013, pp. 2676--2680.

\bibitem{aldridge_almostseparable}
M.~{Aldridge}, L.~{Baldassini}, and K.~{Gunderson},
\newblock ``{Almost Separable Matrices},''
\newblock {\em ArXiv e-prints}, Oct. 2014.

\bibitem{Scar_Cev16}
Jonathan Scarlett and Volkan Cevher,
\newblock ``Phase transitions in group testing,''
\newblock in {\em Proc. Twenty-Seventh Annual {ACM-SIAM} Symposium on Discrete
  Algorithms, {SODA}}, Jan. 2016, pp. 40--53.

\bibitem{Candes_Stable}
E.~J. Cand\'{e}s, J.~K. Romberg, and T.~Tao,
\newblock ``Stable signal recovery from incomplete and inaccurate
  measurements,''
\newblock {\em Communications on Pure and Applied Mathematics}, vol. 59, no. 8,
  pp. 1207--1223, 2006.

\bibitem{WW_SharpThresholds}
M.~J. Wainwright,
\newblock ``Sharp thresholds for high-dimensional and noisy sparsity recovery
  using $\ell_{1}$ -constrained quadratic programming ({LASSO}),''
\newblock {\em {IEEE} Trans. Inf. Theory}, vol. 55, no. 5, pp. 2183--2202, May
  2009.

\bibitem{saligrama_CS}
S.~Aeron, V.~Saligrama, and M.~Zhao,
\newblock ``Information theoretic bounds for compressed sensing,''
\newblock {\em {IEEE} Trans. Inf. Theory}, vol. 56, no. 10, pp. 5111--5130,
  Oct. 2010.

\bibitem{fletcher2009necessary}
A.~K. Fletcher, S.~Rangan, and V.~K. Goyal,
\newblock ``Necessary and sufficient conditions for sparsity pattern
  recovery,''
\newblock {\em {IEEE} Trans. Inf. Theory}, vol. 55, no. 12, pp. 5758--5772,
  2009.

\bibitem{Reeves_conf_supprecov}
G.~Reeves and M.~Gastpar,
\newblock ``A note on optimal support recovery in compressed sensing,''
\newblock in {\em Proc.\ Asilomar Conf.\ on Signals, Syst., and Comput.}, Nov.
  2009, pp. 1576--1580.

\bibitem{Jin_SuppRecovMAC}
Y.~Jin, Y~Kim, and B.~D. Rao,
\newblock ``Limits on support recovery of sparse signals via multiple-access
  communication techniques,''
\newblock {\em {IEEE} Trans. Inf. Theory}, vol. 57, no. 12, pp. 7877--7892,
  2011.

\bibitem{Wainwright_SparseRecovery}
M.~J. Wainwright,
\newblock ``Information-theoretic limits on sparsity recovery in the
  high-dimensional and noisy setting,''
\newblock {\em {IEEE} Trans. Inf. Theory}, vol. 55, no. 12, pp. 5728--5741,
  2009.

\bibitem{Yoo_arxiv_2013}
J.~Yoo, Y.~Xie, A.~Harms, W.~U. Bajwa, and R.~A. Calderbank,
\newblock ``Finding zeros: Greedy detection of holes,''
\newblock {\em eprint arXiv:1303.2048}, 2013.

\bibitem{Gallager_bk}
R.~G. Gallager,
\newblock {\em Information Theory and Reliable Communication},
\newblock John Wiley \& Sons, Inc., New York, NY, USA, 1968.

\bibitem{Cover_bk}
T.~M. Cover and J.~A. Thomas,
\newblock {\em Elements of Information Theory},
\newblock Wiley-Interscience, New York, NY, USA, 1991.

\bibitem{Atia_MutInf}
G.~Atia and V.~Saligrama,
\newblock ``A mutual information characterization for sparse signal
  processing,''
\newblock in {\em The 38th International Colloquium on Automata, Languages and
  Programming (ICALP), Switzerlnd}, 2011.

\bibitem{saligrama_sparsesignproc}
C.~{Aksoylar}, G.~{Atia}, and V.~{Saligrama},
\newblock ``Sparse signal processing with linear and non-linear observations: A
  unified shannon theoretic approach,''
\newblock {\em eprint arXiv:1304.0682}, Apr. 2013.

\bibitem{Gilber_DataStream}
A.~C. Gilbert and M.~J. Strauss,
\newblock ``Analysis of data streams: Computational and algorithmic
  challenges,''
\newblock {\em Technometrics}, vol. 49, no. 3, pp. 346--356, 2007.

\bibitem{Macula_data_pattern}
A.~J. Macula and L.~J. Popyack,
\newblock ``A group testing method for finding patterns in data,''
\newblock {\em Discrete Appl. Math.}, vol. 144, no. 1-2, pp. 149--157, 2004.

\bibitem{Wolf_multi_access}
J.~Wolf,
\newblock ``Born again group testing: Multiaccess communications,''
\newblock {\em {IEEE} Trans. Inf. Theory}, vol. 31, no. 2, pp. 185--191, Mar
  1985.

\bibitem{johnson_gtbp}
D.~Sejdinovic and O.~Johnson,
\newblock ``Note on noisy group testing: Asymptotic bounds and belief
  propagation reconstruction,''
\newblock in {\em Proc.\ Allerton Conf.\ on Commun., Control and Comput.},
  2010, pp. 998--1003.

\end{thebibliography}
%
%
%
%
%
%
\end{document}